\definecolor{newcolor}{hsb}{0.6,1,0.75}
\setlist{
  listparindent=\parindent, 
  parsep=0pt,
}
\DeclareRobustCommand\bfseries{%
  \not@math@alphabet\bfseries\mathbf
  \fontseries\bfdefault\selectfont\boldmath}
\declaretheorem[name=Theorem]{theorem}
\declaretheorem[name=Lemma,sibling=theorem]{lemma}
\declaretheorem[name=Corollary,sibling=theorem]{corollary}
\declaretheorem[name=Claim,sibling=theorem]{claim}
\declaretheorem[name=Fact]{fact}
\declaretheorem[name=Remark,style=remark,sibling=theorem]{remark}
\declaretheorem[name=Puzzle]{puzzle}
\DeclareMathOperator{\C}{C}
\DeclareMathOperator{\UC}{UC}
\DeclareMathOperator{\s}{s}
\DeclareMathOperator{\bp}{bp}
\newcommand{\poly}{\textrm{poly}}
\newcommand{\adeg}{\widetilde{\vphantom{t}\smash{\deg}}}
\newcommand{\E}{\operatorname{\mathbb{E}}}
\renewcommand{\Pr}{\operatorname{\mathbb{P}}}
\newcommand{\Var}{\operatorname{Var}}
\newcommand{\Hex}{\textsc{\upshape\scshape Hex}}
\newcommand{\Eah}{\textsc{\upshape\scshape Eah}}
\newcommand{\CIS}{\textsc{\small\upshape\scshape CIS}}
\newcommand{\AND}{\textsc{\upshape\scshape And}}
\newcommand{\tO}{\tilde{O}}
\newcommand{\tOmega}{\tilde{\Omega}}
\newcommand{\calH}{\mathcal{H}}
\newcommand{\ol}[1]{\overline{#1}}
\begin{document}

\mbox{}\vspace{12mm}

\begin{center}
{\huge Unambiguous DNFs and Alon--Saks--Seymour}
\\[1cm] \large
	
\setlength\tabcolsep{2em}
\begin{tabular}{ccc}
Kaspars Balodis&
Shalev Ben-David&
Mika G\"o\"os\\[-1mm]
\small\slshape University of Latvia &
\small\slshape University of Waterloo &
\small\slshape EPFL
\end{tabular}

\vspace{2mm}
\begin{tabular}{cc}
Siddhartha Jain&
Robin Kothari\\[-1mm]
\small\slshape EPFL &
\small\slshape Microsoft Quantum
\end{tabular}
	
\vspace{6mm}
	
\large
\today
	
\vspace{6mm}
	
\end{center}

\begin{quote}
\leftskip4mm
\rightskip\leftskip
\noindent\small
{\bf Abstract.}~
We exhibit an unambiguous $k$-DNF formula that requires CNF width $\tOmega(k^2)$, which is optimal up to logarithmic factors. As a consequence, we get a near-optimal solution to the Alon--Saks--Seymour problem in graph theory (posed in 1991), which asks: How large a gap can there be between the chromatic number of a graph and its biclique partition number? Our result is also known to imply several other improved separations in query and communication complexity.
\end{quote}

\section{Three puzzles}

\subsection{First formulation}
An $n$-variate DNF formula $F= C_1\lor\cdots\lor C_m$ is said to be \emph{unambiguous} if for every input $x\in\{0,1\}^n$ at most one of the conjunctions $C_i$ evaluates to true, $C_i(x)=1$. If we think of the DNF formula as expressing its set of 1-inputs $F^{-1}(1)$ as a union of subcubes $C_i^{-1}(1)$, then $F$ is unambiguous precisely when the subcubes are pairwise disjoint. Unambiguity is a severe structural restriction on DNFs. In particular, every unambiguous DNF formula of bounded \emph{width} (defined as the maximum number of literals in a conjunction) can be written equivalently as a bounded-width CNF formula. Namely, we have the following folklore fact~\cite[\S III]{Goos2015}.

\begin{fact} \label{fact}
Every unambiguous $k$-DNF can be written equivalently as a $k^2$-CNF.	
\end{fact}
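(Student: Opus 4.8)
Here's my plan for proving Fact~\ref{fact}.

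The plan is to start from an unambiguous $k$-DNF $F = C_1 \lor \cdots \lor C_m$ and produce an equivalent $k^2$-CNF by clausifying the negation. Since the subcubes $C_i^{-1}(1)$ are pairwise disjoint and their union is $F^{-1}(1)$, an input $x$ satisfies $F(x)=0$ exactly when $x$ falls into none of the subcubes. So $\neg F = \bigwedge_i \neg C_i$, where each $\neg C_i$ is a clause (a disjunction of $\le k$ literals). This is a $k$-CNF for $\neg F$, hence a $k$-DNF for... no, wait --- the issue is that I want a CNF for $F$ itself, not for $\neg F$. The naive De Morgan expansion of $\bigvee_i C_i$ into a CNF multiplies out to clauses of width $m$, not $k^2$. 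The real work is to exploit unambiguity to keep the width down.

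The key idea is a covering/certificate argument. Fix any input $y$ with $F(y) = 1$; then $y$ lies in a unique subcube $C_j^{-1}(1)$. I claim there is a set $S$ of at most $k^2$ coordinates such that every input $x$ agreeing with $y$ on $S$ also has $F(x)=1$ --- in other words, $y$ has a 1-certificate of size $\le k^2$. To see this, take the $\le k$ coordinates fixed by $C_j$ (these already force membership in $C_j$, hence $F=1$)... but that only gives a certificate of size $k$, which would give a $k$-CNF and is too good to be true in general --- so I must be careful: the point is that for a $0$-input $y$, the certificate is what's large. Let me reorganize: for an input $y$ with $F(y)=0$, consider the clauses $\neg C_i$; at least one literal of each $\neg C_i$ is satisfied by $y$. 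Greedily or by a counting argument, the relevant bound is that a small number of these literals already block all the subcubes that are ``close'' to $y$. The cleanest route: show every $0$-input has a $0$-certificate of width $\le k^2$, which is equivalent to the $k^2$-CNF bound (a function has a $w$-CNF iff every $0$-input has a $0$-certificate of size $\le w$).

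So here is the argument I would carry out. Let $F(y) = 0$. Build a certificate greedily: start with the empty set of fixed coordinates. As long as there is some subcube $C_i^{-1}(1)$ consistent with the coordinates fixed so far, pick such a $C_i$ and add to our set all $\le k$ coordinates on which $C_i$ disagrees with $y$ --- there is at least one such coordinate since $y \notin C_i^{-1}(1)$, and adding these coordinates (with $y$'s values) makes the partial assignment inconsistent with $C_i$ permanently. The crucial claim, where unambiguity enters, is that this process terminates after at most $k$ rounds: each time we fix coordinates to $y$'s values to kill a subcube $C_i$, the subcube $C_i$ had at most $k$ coordinates, and I argue that after $k$ rounds the fixed set must be inconsistent with every remaining subcube. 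This is where I expect the main obstacle to lie --- proving the round bound is $k$ rather than something larger. The argument is that the subcubes chosen in successive rounds must "share" a coordinate with each other in a structured way forced by disjointness: if $C_{i_1}, \ldots, C_{i_{k+1}}$ were chosen in $k+1$ successive rounds, each $C_{i_t}$ is consistent with $y$ restricted to the coordinates of $C_{i_1}, \ldots, C_{i_{t-1}}$, and one shows via unambiguity (pairwise disjointness of the subcubes) that this forces $C_{i_1}$ to have more than $k$ coordinates, a contradiction. Putting the round bound together with the $\le k$ coordinates added per round yields a $0$-certificate of width $\le k^2$, and dualizing gives the $k^2$-CNF.
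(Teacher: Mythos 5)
The paper does not prove this fact; it cites it as folklore from \cite{Goos2015}. So you are on your own here, and I will evaluate your argument directly against the standard proof.

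Your high-level plan is exactly right: show that every $0$-input $y$ has a $0$-certificate of size at most $k^2$, build it greedily by repeatedly ``killing'' any term $C_i$ still consistent with the coordinates fixed so far, and bound the number of rounds by $k$ using unambiguity. That is the standard approach. However, two of the specific choices in your sketch do not work as written.

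First, the greedy step. You propose adding only the coordinates on which $C_i$ disagrees with $y$. The standard (and, I believe, necessary) move is to add \emph{all} of the $\le k$ coordinates read by $C_i$, fixed to $y$'s values --- the agreement coordinates too. Those agreement coordinates are what make the round bound provable, because they force every later term to agree with $y$ on the entire variable set of every earlier term. With the disagreement-only variant your own round-bound claim fails: e.g.\ with $k=2$, $y=0^n$, and the unambiguous DNF with terms $x_1\bar x_2,\ x_2\bar x_3,\ \bar x_1 x_3$, your greedy picks all three terms before stopping, so $T=3>k$. (The certificate still happens to be small here, but you no longer have the clean invariant that powers the argument.) Indeed, notice your own round-bound sketch already silently assumes the all-variables version: you say ``each $C_{i_t}$ is consistent with $y$ restricted to the coordinates of $C_{i_1},\ldots,C_{i_{t-1}}$,'' which is only true if you fixed all those coordinates, not just the disagreement ones.

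Second, even granting the all-variables greedy, your round-bound argument points at the wrong term. You argue that $C_{i_1}$ would be forced to have more than $k$ coordinates. But the disjointness witnesses $\ell_t\in\mathrm{vars}(C_{i_1})\cap\mathrm{vars}(C_{i_t})$ need not be distinct across $t$: e.g.\ $C_1=x_1x_2x_3$, $y=0^n$, $C_2=\bar x_1 x_4 x_5$, $C_3=\bar x_1\bar x_4 x_6$ --- both $C_2$ and $C_3$ witness disjointness from $C_1$ on $x_1$. The argument must instead target the \emph{last} term $C_{i_T}$. For each $s<T$, disjointness of $C_{i_s}$ and $C_{i_T}$ gives a coordinate $\ell_s\in\mathrm{vars}(C_{i_s})\cap\mathrm{vars}(C_{i_T})$ on which they disagree; since $C_{i_T}$ is consistent with $\rho_{T-1}\supseteq y|_{\mathrm{vars}(C_{i_s})}$, $C_{i_T}$ agrees with $y$ on $\ell_s$ and hence $C_{i_s}$ disagrees with $y$ on $\ell_s$. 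These $\ell_s$ \emph{are} distinct: if $\ell_s=\ell_{s'}$ with $s<s'$, then $C_{i_{s'}}$ would disagree with $y$ on a coordinate already fixed in $\rho_{s'-1}$ (it lies in $\mathrm{vars}(C_{i_s})$), contradicting the choice of $C_{i_{s'}}$ as consistent with $\rho_{s'-1}$. Finally, since $F(y)=0$, the term $C_{i_T}$ must disagree with $y$ on at least one additional coordinate distinct from all $\ell_s$. Hence $|\mathrm{vars}(C_{i_T})|\ge T$, giving $T\le k$ and a certificate of size at most $Tk\le k^2$.

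With those two corrections your outline becomes the standard proof. As an aside, once the coordinates of earlier terms are counted with multiplicity one, this argument actually yields the slightly sharper bound $\binom{k+1}{2}$, but $k^2$ is all the Fact requires.
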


In this paper, we ask: Can this quadratic relationship be improved? Are there unambiguous $k$-DNFs that require CNFs of width much larger than $k$, perhaps even $\Omega(k^2)$? More formally, for a boolean function $f\colon\{0,1\}^n\to\{0,1\}$ we define the following standard complexity measures.
\begin{itemize}[noitemsep,label=$-$]
\item \emph{$1$-certificate complexity} $\C_1(f)$ is the least $k$ such that $f$ can be written as a $k$-DNF;
\item \emph{$0$-certificate complexity} $\C_0(f)$ is the least $k$ such that $f$ can be written as a $k$-CNF;
\item \emph{unambiguous $1$-certificate complexity} $\UC_1(f)$ is the least $k$ such that $f$ can be written as an unambiguous $k$-DNF.
\end{itemize}

\begin{puzzle}
\label{p1}
For $\alpha>1$, does there exist a boolean function $f$ with $\C_0(f) \geq \Omega(\UC_1(f)^\alpha)$?
\end{puzzle}

Here we abused terminology: instead of a single boolean function we really mean an infinite sequence of functions $f_n$ satisfying $\C_0(f_n)=\omega(1)$ as $n\to\infty$. \autoref{p1} was first asked in~\cite{Goos2015}, although an analogous question had been studied in communication complexity (under the name \emph{clique vs.\ independent set}; see \autoref{sec:result-apps}) since Yannakakis~\cite{Yannakakis1991}. The paper~\cite{Goos2015} gave a complicated recursive construction achieving an exponent $\alpha \approx 1.12$. This was subsequently optimised (but not simplified) in~\cite{BenDavid2017} improving the exponent to $\alpha\approx 1.22$.

\bigskip\noindent
{\bf Our main result} is a near-quadratic separation for \autoref{p1} (formally stated as \autoref{thm:gap} in \autoref{sec:results}), which matches the upper bound of \autoref{fact} up to logarithmic factors. Moreover, our construction is vastly simpler than previous ones.

\subsection{Second formulation}
In order to separate boolean function complexity measures it is often a good idea to proceed in two steps: First construct a partial boolean function $f\colon\{0,1\}^n\to\{0,1,*\}$ where some inputs $x$ are \emph{undefined}, $f(x)=*$. Then modify $f$ into a \emph{total} function by eliminating all the~$*$-inputs. We now formulate an appropriate partial function version of \autoref{p1}.

We recall the notion of a \emph{certificate}, adapted here for a partial function $f\colon\{0,1\}^n\to\{0,1,*\}$. Let $\Sigma\subseteq\{0,1,*\}$ be a subset of output symbols. We write for short $0,1,\ol{0},\ol{1}$ for the output sets $\{0\},\{1\},\{1,*\},\{0,*\}$. A partial input~$\rho\in \{0,1,*\}^n$ is a \emph{$\Sigma$-certificate} for $x\in\{0,1\}^n$ if $\rho$ is consistent with $x$ and for every input $x'$ consistent with~$\rho$ we have $f(x')\in \Sigma$. The \emph{size} of $\rho$, denoted $|\rho|$, is the number of its non-$*$ entries. The \emph{$\Sigma$-certificate complexity} of $x$, denoted $\C_\Sigma(f,x)$, is the least size of a $\Sigma$-certificate for $x$. The \emph{$\Sigma$-certificate complexity} of $f$, denoted $\C_\Sigma(f)$, is the maximum of~$\C_\Sigma(f,x)$ over all $x\in f^{-1}(\Sigma)$; this definition is consistent with the one given at the start of this section. Finally, we define \emph{certificate complexity} $\C(f)$ as $\max\{\C_0(f),\C_1(f)\}$.

\begin{puzzle}
\label{p2}
For $\alpha > 1$, does there exist a partial function $f$ together with an $x\in f^{-1}(*)$ such that both $\C_{\ol{0}}(f,x)$ and $\C_{\ol{1}}(f,x)$ are at least $\Omega(\C(f)^\alpha)$?
\end{puzzle}

We will show in \autoref{thm:equivalences} that \autoref{p1} and \ref{p2} are in fact equivalent: solving one with an exponent $\alpha$ will imply a solution to the other one with the same $\alpha$. The implication \ref{p1}$\Rightarrow$\ref{p2} is easy while the converse (converting a partial function into a total one) is non-trivial and uses the \emph{cheat sheet} framework introduced in~\cite{Aaronson2016}. Consequently, we feel that \ref{p2} is the most fruitful formulation to attack and that is indeed how our near-quadratic separation is obtained. 

\subsection{Third formulation}
We present one more equivalent formulation using purely graph theoretic language. While this version is not needed for our separation result, we include it for aesthetic reasons. Let $G=(V,E)$ be a hypergraph. We say $G$ is \emph{intersecting} if every two edges $e,e'\in E$ intersect, $e\cap e'\neq \emptyset$. A subset $U\subseteq V$ is a \emph{hitting set} for $G$ if $U$ intersects every edge $e\in E$. Moreover, $U$ is \emph{$c$-monochromatic} for a colouring $c\colon V\to\{0,1\}$ if $c$ is constant on $U$. Finally, we define the \emph{rank} of~$G$, denoted $r(G)$, as the maximum size $|e|$ of an edge $e\in E$. 

\begin{puzzle}
\label{p3}
For $\alpha > 1$, does there exist an intersecting hypergraph $G=(V,E)$ together with a colouring $c\colon V\to\{0,1\}$ such every $c$-monochromatic hitting set has size at least $\Omega(r(G)^\alpha)$?
\end{puzzle}

\autoref{p3} obscures the complexity-theoretic origins of the problem, thereby rendering it increasingly seductive for, say, an unsuspecting audience of combinatorialists (cf.~\cite{Raz2011}). In fact, we found all three formulations and proved them equivalent already in late 2015, and since then we have been deploying the camouflaged variant~\ref{p3} on several occasions, including, notably and most unsuccessfully, at an open problem seminar at the Institute for Advanced Study in 2018.

\section{Our contributions} \label{sec:results}

Our main results are as follows; here, the notation $\tOmega(n)$ hides $\poly(\log n)$ factors.
\begin{theorem} \label{thm:gap}
There exists a boolean function $f$ with $\C_0(f)\geq \tOmega(\UC_1(f)^2)$.
\end{theorem}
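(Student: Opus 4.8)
The plan is to exploit the equivalence between Puzzle~\ref{p1} and Puzzle~\ref{p2} (Theorem~\ref{thm:equivalences}) and work entirely in the partial-function world, where the combinatorics is cleanest. So the first task is to construct a partial boolean function $g\colon\{0,1\}^N\to\{0,1,*\}$ together with a distinguished input $x\in g^{-1}(*)$ such that $\C(g)=O(k\,\poly\log k)$ while both $\C_{\ol 0}(g,x)$ and $\C_{\ol 1}(g,x)$ are $\tOmega(k^2)$. The intuition is that from the point of view of a $0$- or $1$-certificate the function should look ``cheap'' (roughly linear-width), but certifying that a particular input is genuinely undefined — ruling out both a consistent $0$-input and a consistent $1$-input — should cost nearly the square. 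A natural candidate is a carefully chosen composition or tensor-type construction: take a small ``gadget'' partial function on $O(k)$ bits whose $*$-input forces a large index-search, and compose it with itself (or with an AND/OR-type outer function) so that the two different ways of resolving the ambiguity blow up multiplicatively, giving $k\cdot k$. One wants the gadget designed so that any $\ol 0$-certificate for $x$ must, in effect, locate a witness in each of $\Omega(k)$ independent coordinates, each witness itself costing $\Omega(k)$, and symmetrically for $\ol 1$.

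Once such a partial function $g$ is in hand, I would invoke the ``easy'' direction is not what we need — rather, the \emph{hard} direction of Theorem~\ref{thm:equivalences}, namely \ref{p2}$\Rightarrow$\ref{p1}: apply the cheat-sheet transformation of~\cite{Aaronson2016} to $g$ to obtain a \emph{total} boolean function $f$ on $\poly(N)$ variables. The cheat-sheet machinery is designed exactly so that $\UC_1(f)$ is governed by $\C(g)$ (up to polylog factors coming from the address/verification blocks), while $\C_0(f)$ inherits the large $\C_{\ol 0}(g,x)$-type lower bound because refuting a $0$-certificate on the relevant block of $f$ forces you to solve the ambiguity problem at $x$. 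Combining, $\C_0(f)\geq\tOmega(k^2)$ and $\UC_1(f)\leq\tO(k)$, which after renormalizing is precisely $\C_0(f)\geq\tOmega(\UC_1(f)^2)$, proving Theorem~\ref{thm:gap}.

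The key steps, in order, are: (1) isolate the right gadget partial function on $O(k)$ bits and verify the trivial upper bound $\C(\text{gadget})=O(k)$; (2) prove the gadget's $*$-input has $\ol 0$- and $\ol 1$-certificate complexity $\Omega(k)$ (this is where the disjointness/intersecting structure of Puzzle~\ref{p3} is really being used); (3) amplify by composition to reach the quadratic gap $\tOmega(k^2)$ against $\tO(k)$ in the partial setting; (4) push through the cheat-sheet conversion to totality, tracking the polylog overhead; (5) assemble the bounds.

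The main obstacle I expect is step~(3) together with the amplification analysis: naively composing two gadgets does not automatically multiply the $\ol 0$- and $\ol 1$-certificate costs, because a clever certificate might ``cheat'' by certifying the outer structure with a sub- ambiguous set of inner coordinates. Getting the construction so that \emph{every} $\ol 0$-certificate for the distinguished $*$-input is forced to pay $\Omega(k)$ on each of $\Omega(k)$ coordinates — and simultaneously arranging the symmetric property for $\ol 1$, with the two obstructions not interfering — is the delicate heart of the argument. Making this work while keeping $\C(f)$ small (so that the gadget genuinely has cheap ordinary certificates) is the tension the whole construction must resolve; I anticipate it requires choosing the gadget's $1$-set and $0$-set to be combinatorial designs (e.g., based on a bipartite incidence structure or a projective-plane-like object) so that small certificates exist for defined inputs yet no small certificate can simultaneously exclude a $0$-extension and a $1$-extension of $x$.
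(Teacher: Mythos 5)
You have correctly identified the outer scaffolding: solve Puzzle~\ref{p2} by exhibiting a partial function $g$ with a distinguished $*$-input $x$ where both $\C_{\ol 0}(g,x)$ and $\C_{\ol 1}(g,x)$ are $\tOmega(\C(g)^2)$, then apply the cheat-sheet transformation \ref{p2}$\Rightarrow$\ref{p1} of \autoref{thm:equivalences} to obtain a total $f$. That pipeline matches the paper exactly. The gap is entirely in step~(3), and you diagnose the difficulty yourself: you propose to reach the quadratic gap by composing a linear-width gadget with itself so the $\ol 0$- and $\ol 1$-costs multiply, and you flag that ``naively composing two gadgets does not automatically multiply'' the costs because a certificate might cheat across coordinates. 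That concern is well-founded, and the paper does not attempt composition at all. No amplification step appears anywhere in the argument.

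Instead, the paper constructs a single partial function $\Eah_n$ directly from an $n$-uniform hypergraph $G=(V,E)$ on $|V|=n^2$ vertices with $|E|=n^2$ edges, having one input variable per vertex and one per edge. An input is a $1$-input if some edge is ``lit'' (its edge variable and all of its vertex variables equal $1$); it is a $0$-input if it is not a $1$-input yet admits a $\ol 1$-certificate of size at most $\tO(n)$; otherwise it is a $*$-input. The hard $*$-input $z$ sets all vertex variables to $1$ and all edge variables to $0$, which makes $\C_{\ol 1}(z)=n^2$ trivially (one must read every edge variable). The nontrivial bound $\C_{\ol 0}(z)\geq\Omega(n^2)$ comes from a pseudorandom ``everywhere almost-hittable'' property of $G$ (\autoref{lem:eah}): for every forbidden set $F$ of $\frac{1}{100}n^2$ vertices there is a small hitting set $H\subseteq V\setminus F$ of size $\tO(n)$ covering all but $n$ edges, so flipping the vertices of $H$ to $0$ turns $z$ into a legal $0$-input consistent with any small partial assignment. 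The hypergraph is realized explicitly via a pairwise-independent hash family (e.g.\ $h_{a,b}(i)=ai+b$ over $\mathbb{F}_n$); this is close in spirit to your ``projective-plane-like object'' remark, but it is used as the whole construction, not as a gadget inside a composition. Your proposal therefore captures the reduction framework but is missing the central constructive idea, and the amplification route you sketch is not the one that works.
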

\begin{theorem} \label{thm:equivalences}
Puzzles \ref{p1}, \ref{p2}, \ref{p3} are near-equivalent: if one of them can be solved with exponent $\alpha$, then all of them can be solved with exponent $\alpha$ up to factors logarithmic in input length.
\end{theorem}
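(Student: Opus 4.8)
The strategy is to establish a cycle of implications among the three puzzles, being careful that each step preserves the exponent $\alpha$ up to logarithmic factors. The easiest edges are those relating Puzzle I to Puzzle III and Puzzle I to Puzzle II in the forward direction; the genuinely hard edge is Puzzle II $\Rightarrow$ Puzzle I, which requires totalizing a partial function, and for that I would invoke the cheat sheet framework of \cite{Aaronson2016} as the paper itself flags.

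First I would handle \textbf{Puzzle I $\Leftrightarrow$ Puzzle III} by a direct dictionary translation. Given an unambiguous $k$-DNF $F = C_1 \lor \cdots \lor C_m$ computing $f$ with $\C_0(f)$ large, build a hypergraph $G$ whose vertex set encodes the $2n$ literals and whose edges are the literal-sets of the conjunctions $C_i$; unambiguity of $F$ (pairwise-disjoint subcubes) is exactly what forces any two $C_i, C_j$ to contain opposite literals of some variable, i.e. makes $G$ intersecting once we identify a literal with its negation appropriately — here one sets up the vertex set as the $n$ variables with a $2$-colouring recording which polarity appears, so that a monochromatic hitting set corresponds to a small CNF certificate that all $C_i$ fail, hence a $0$-certificate. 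The rank of $G$ is the DNF width $k = \UC_1(f)$, and the minimum $c$-monochromatic hitting set size is $\Theta(\C_0(f))$. Running this translation in reverse recovers Puzzle I from Puzzle III. This step is entirely syntactic and loses no factors.

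Next, \textbf{Puzzle I $\Rightarrow$ Puzzle II}. Given a total $f$ with $\C_0(f) \geq \Omega(\UC_1(f)^\alpha)$, I would pass to a partial function by a standard "pointer" or "address" gadget: take a fresh input $x$ to be undefined, and let $\C_{\ol 1}(f,x), \C_{\ol 0}(f,x)$ inherit the $0$- and $1$-certificate structure of $f$. Concretely one can use the partial function obtained by a one-bit promise trick so that $\C(f)$ on the defined part equals $\max\{\C_0,\C_1\} \approx \UC_1$ of the original, while the two one-sided certificate complexities at the undefined point are both $\Omega(\C_0)$. The key point is that an unambiguous DNF gives, for every $1$-input, a small $1$-certificate and for every $0$-input a small $0$-certificate simultaneously with the subcubes tiling $f^{-1}(1)$, which is precisely the feature that makes $\C(f)$ small; this is why the unambiguity hypothesis in I becomes the "small $\C(f)$" hypothesis in II.

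The main obstacle is the reverse direction \textbf{Puzzle II $\Rightarrow$ Puzzle I}: converting the partial function $f$ with its distinguished $*$-input into a \emph{total} boolean function $g$ with $\C_0(g) \geq \tOmega(\UC_1(g)^\alpha)$. Here I would follow the cheat sheet construction of \cite{Aaronson2016}: form $g$ by composing $f$ with $O(\log)$-many copies and attaching a cheat-sheet array that stores certificates, arranged so that (i) an unambiguous DNF for $g$ can be built of width $\tO(\C(f))$ — the cheat sheet lets one "verify" which branch of the partial function one is on without ambiguity, using the stored certificates to disambiguate — while (ii) any CNF for $g$ must, on the cell of the array indexed by consistent inputs, effectively solve the $\C_{\ol 0}$ versus $\C_{\ol 1}$ dichotomy at the $*$-input, forcing width $\tOmega(\min\{\C_{\ol 0}(f,x), \C_{\ol 1}(f,x)\})$. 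Verifying that the cheat sheet machinery indeed produces an \emph{unambiguous} DNF (not just a small-certificate function) is the delicate part and is where the logarithmic overheads enter; the rest is bookkeeping to check that the exponent $\alpha$ is transported intact. Chaining II $\Rightarrow$ I $\Rightarrow$ III $\Rightarrow$ II (or any closing of the cycle) then gives all equivalences up to $\poly(\log)$ factors.
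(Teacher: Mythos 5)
Your high-level map of the implications is right in one respect: you correctly identify Puzzle~II~$\Rightarrow$~Puzzle~I as the genuinely nontrivial direction and that it requires the cheat-sheet machinery of~\cite{Aaronson2016}, and your sketch of that step matches the paper's. However, the two implications you label as ``easy'' or ``entirely syntactic'' --- Puzzle~I~$\Leftrightarrow$~Puzzle~III and Puzzle~I~$\Rightarrow$~Puzzle~II --- both contain genuine gaps, and these are not merely bookkeeping issues.

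The core problem is that Puzzles~II and~III are each anchored to a distinguished $*$-input: in~II explicitly, and in~III implicitly, since the colouring $c$ plays the role of a $*$-input (this is exactly how the paper's~II~$\Rightarrow$~III construction works: $c(v_{i,x_i})=0$, $c(v_{i,1-x_i})=1$ is read off the $*$-input~$x$). A \emph{total} function~$f$, as in Puzzle~I, has no $*$-inputs. Your proposed Puzzle~I~$\Rightarrow$~Puzzle~III dictionary --- ``a 2-colouring recording which polarity appears'' --- is not well-defined, because a variable can appear in both polarities across the conjunctions $C_i$; and if you instead try to colour via any \emph{defined} input $x$ of~$f$, that input has a short certificate whose associated edge is small and monochromatic, so it is itself a small monochromatic hitting set, and the construction collapses. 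So there is no direct, exponent-preserving syntactic dictionary between~I and~III; the paper instead goes~II~$\Rightarrow$~III and~III~$\Rightarrow$~II, routing through the partial function and its $*$-input.

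Similarly, your ``one-bit promise trick'' for Puzzle~I~$\Rightarrow$~Puzzle~II does not work as stated. If you take a total~$f$ with $\C_0(f)\gg\UC_1(f)$ and merely declare some hard $0$-input~$x$ to be undefined, the resulting partial function $g$ still has $\C(g)\geq\C_0(g)\approx\C_0(f)$ (all the other $0$-inputs are still there with their large $0$-certificates), so $\C(g)$ is \emph{large}, which is the opposite of what Puzzle~II requires. The paper's construction is more subtle: it sets $g(x,y)\coloneqq *$ on \emph{every} pair with $f(x)=0$, so that $g$ is defined only where $f(x)=1$, and on those inputs it outputs the parity $\bigoplus_{i\in r(\rho_x)} y_i$ over the positions read by the \emph{unambiguous} $1$-certificate $\rho_x$. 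This is what drives $\C(g)$ down to $O(\UC_1(f))$: on every defined input, reading $\rho_x$ plus the corresponding $y$-bits suffices. The lower bound on $\min\{\C_{\ol 0}(g,z),\C_{\ol 1}(g,z)\}$ at $z=(x,0^n)$, $x$ a hard $0$-input, then follows by extending any too-short restriction to a $1$-certificate $\sigma$ of $f$ and flipping a $y$-bit in $r(\sigma)$ to realise either parity. Your proposal does not contain this parity gadget or any substitute for it, so the claim that $\C(g)\approx\UC_1(f)$ while both $\C_{\ol 0}$ and $\C_{\ol 1}$ are $\Omega(\C_0(f))$ is unsupported.

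In short: the cheat-sheet step is identified correctly, but the two ``easy'' legs of your cycle are the ones that fail, precisely because a total function lacks a $*$-input and because declaring an input undefined does not shrink certificate complexity. The paper avoids both problems by never attempting a direct I~$\Leftrightarrow$~III translation and by using the XOR-over-the-unambiguous-certificate gadget for I~$\Rightarrow$~II.
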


Our near-quadratic separation (\autoref{sec:quadratic}) is phrased as a solution to \autoref{p2} and so \autoref{thm:gap} follows from the equivalences in \autoref{thm:equivalences} (proved in \autoref{sec:equivalences}). We next discuss how our results imply several other separations in graph theory and query/communication complexity.

\begin{table}[!b]
\vspace{2mm}
\centering

\newcommand{\ub}{\color{ForestGreen}}
\newcommand{\lb}{\color{BrickRed}}

\renewcommand{\arraystretch}{1.1}
\setlength\tabcolsep{0.1em}
\begin{tabular}{l@{\hspace{1em}}l@{\hspace{4.5em}}l@{\hspace{2em}}l@{\hspace{2.5em}}l}
\toprule[.5mm]
\bf Reference &&& \bf $\chi(G)$ & \bf $\CIS_G$ \\
\midrule
Yannakakis & \cite{Yannakakis1991} & \ub $\forall G\colon$ & & $O(\log^2 n)$ \\
Mubayi and Vishwanathan & \cite{Mubayi2009} & \ub $\forall G\colon$ & $\exp(O(\log^2\bp(G)))$ \\
\midrule
Huang and Sudakov & \cite{Huang2012} & \lb $\exists G\colon$ & $\Omega(\bp(G)^{6/5})$ & $\geqslant\ 6/5 \cdot \log n$ \\
Amano & \cite{Amano2014} & \lb $\exists G\colon$ &  & $\geqslant\ 3/2 \cdot \log n$ \\
Shigeta and Amano & \cite{Shigeta2015} & \lb $\exists G\colon$ &  & $\geqslant\ 2 \cdot \log n$ \\
G\"o\"os & \cite{Goos2015} &\lb $\exists G\colon$ & $\exp(\Omega(\log^{1.12} \bp(G)))$ & $\Omega(\log^{1.12}n)$ \\
Ben-David et al. & \cite{BenDavid2017} & \lb $\exists G\colon$ & $\exp(\Omega(\log^{1.22} \bp(G)))$ & $\Omega(\log^{1.22} n)$ \\
\midrule
\bf This work & & \lb $\exists G\colon$ & $\exp(\tOmega(\log^2 \bp(G)))$ & $\tOmega(\log^2 n)$ \\
\bottomrule[.5mm]
\end{tabular}
\caption{{\ub\bf Upper} and {\lb\bf lower} bounds for the Alon--Saks--Seymour problem and for the conondeterministic communication complexity of clique vs.\ independent set problem. The two problems are near-equivalent~\cite{Bousquet2014}: a separation $\chi(G)\geq\bp(G)^c$ implies a conondeterministic lower bound $c\cdot \log n$ for some $\CIS_H$; conversely, a lower bound $c\cdot\log n$ for $\CIS_H$ implies a separation $\chi(G)\geq \Omega(\bp(G)^{c/2})$ for some $G$.}
\label{tab:bounds}
\end{table}

\subsection{Applications: Alon--Saks--Seymour and clique vs.\ independent set} \label{sec:result-apps}

The original motivation for studying \autoref{p1}~in \cite{Goos2015} was that its solutions imply lower bounds for two well-studied problems.
\begin{description}
\item[Alon--Saks--Seymour problem~{\upshape\cite{Kahn1991}.}]
For a graph $G$, how large can the chromatic number $\chi(G)$ be compared to the biclique partition number $\bp(G)$ (minimum number of complete bipartite graphs needed to partition the edges of $G$)?
\item[Clique vs.\ independent set problem~{\upshape\cite{Yannakakis1991}}.]
Define a two-party communication problem relative to an $n$-vertex graph $G=(V,E)$ as follows: Alice gets a clique $x\subseteq V$, Bob gets an independent set $y\subseteq V$, and their goal is to output $\CIS_G(x,y)\coloneqq |x\cap y|\in\{0,1\}$.
\end{description}
A surprising connection here is that constructing separations for the Alon--Saks--Seymour problem is equivalent to proving lower bounds on the \emph{conondeterministic} communication complexity of clique vs.\ independent set; see Bousquet et al.~\cite{Bousquet2014} for an excellent survey of this connection. \autoref{tab:bounds} summarises the progress on these two problems. In particular, Huang and Sudakov~\cite{Huang2012} were the first to find a polynomial separation between $\chi(G)$ and $\bp(G)$, which disproved a conjectured linear relationship due to Alon, Saks, and Seymour (formulated in 1991~\cite{Kahn1991}). Subsequent work has found alternative polynomial separations~\cite{Cioaba2011,Amano2014,Shigeta2015}, including improved lower bounds for $\CIS_G$. The first superpolynomial separation between $\chi(G)$ and $\bp(G)$ was obtained in~\cite{Goos2015}. This was achieved by employing a \emph{lifting theorem}~\cite{Goos2016} that converts a solution to \autoref{p1} with exponent $\alpha$ into a graph $G$ witnessing a separation $\chi(G)\geq \exp(\tOmega(\log^{\alpha}\bp(G)))$, or equivalently, into a conondeterministic lower bound $\tOmega(\log^\alpha n)$ for some $\CIS_H$. If we plug~\autoref{thm:gap} into the lifting framework of~\cite{Goos2015,Goos2016} we get near-optimal lower bounds for the two problems.%
\begin{corollary} \label{cor:cis}
There exists a graph $G$ such that $\chi(G)\geq \exp({\tOmega(\log^2\bp(G))})$. Equivalently, there exists a graph $H$ such that $\CIS_H$ requires $\tOmega(\log^2n)$ bits of conondeterministic communication. 
\end{corollary}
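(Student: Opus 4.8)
The plan is to push the boolean function $f$ of \autoref{thm:gap} through the query-to-communication lifting framework of \cite{Goos2015,Goos2016}. Write $k \coloneqq \UC_1(f)$, so that \autoref{thm:gap} gives $\C_0(f) \geq \tOmega(k^2)$ while, by definition of $\UC_1$, the function $f$ is computed by an unambiguous $k$-DNF. Compose $f$ with $n$ copies of a gadget $g$, one per variable, where $g$ has logarithmic bit-length (the standard index gadget on $\poly(n)$ coordinates suffices), obtaining a two-party function $F = f \circ g^n$. The lifting framework then supplies two facts about $F$: (i) lifting the unambiguous $k$-DNF produces an \emph{unambiguous} cover of $F^{-1}(1)$ by $t \leq 2^{\tO(k)}$ combinatorial rectangles; and (ii) lifting the certificate bound $\C_0(f) \geq \tOmega(k^2)$ shows that every rectangle cover of $F^{-1}(0)$ has size at least $2^{\tOmega(k^2)}$, since a smaller such cover would translate back to a width-$o(k^2)$ CNF for $f$.

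Next I would invoke the graph-theoretic half of the framework, which goes back to Yannakakis~\cite{Yannakakis1991} and is made explicit in~\cite{Goos2015,Bousquet2014}: the function $F$ together with its size-$t$ unambiguous $1$-cover can be presented, up to relabelling, as a clique-vs-independent-set problem $\CIS_G$ whose underlying graph $G$ satisfies $\bp(G) \leq t \leq 2^{\tO(k)}$, while the rectangle-cover lower bound (ii) forces $\chi(G) \geq 2^{\tOmega(k^2)}$. Since $\log\bp(G) \leq \tO(k)$, this reads $\log\chi(G) \geq \tOmega(k^2) \geq \tOmega\!\left((\log\bp(G))^2\right)$, i.e. $\chi(G) \geq \exp(\tOmega(\log^2\bp(G)))$, which is the first assertion of the corollary.

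For the equivalent communication statement I would quote the near-equivalence of Bousquet, Lagoutte, and Thomass\'e~\cite{Bousquet2014} recorded beneath \autoref{tab:bounds}: a separation $\chi(G) \geq \bp(G)^c$ produces an $n$-vertex graph $H$ for which $\CIS_H$ requires at least $c\cdot\log n$ bits of conondeterministic communication. Applying this with $c = \tOmega(\log\bp(G)) = \tOmega(\log n)$ yields the claimed $\tOmega(\log^2 n)$ lower bound. Alternatively, one can read the conondeterministic lower bound directly off the lifted function $F$, since $F$ is already of clique-vs-independent-set type and (ii) is exactly a lower bound on its conondeterministic complexity.

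I do not anticipate a genuine mathematical obstacle, as every ingredient is a black box established elsewhere; the one thing requiring care is the bookkeeping of polylogarithmic overheads. The $\poly(n)$-size gadget, the loss in the lifting theorem, and the reduction of~\cite{Bousquet2014} each cost only $\poly\log$ factors, and one must check that these accumulate to a single $\poly\log$ loss so that the quadratic exponent is preserved. This is precisely why the statement is phrased with $\tOmega(\cdot)$ rather than $\Omega(\cdot)$.
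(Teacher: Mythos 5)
Your proposal is essentially the paper's own derivation: the paper proves this corollary simply by citing ``plug \autoref{thm:gap} into the lifting framework of~\cite{Goos2015,Goos2016},'' and what you have written is a faithful unpacking of that black box --- lift $f$ with a logarithmic-cost index gadget, lift the unambiguous $k$-DNF to an unambiguous rectangle cover of size $2^{\tO(k)}$, lift $\C_0(f)\geq\tOmega(k^2)$ to a $2^{\tOmega(k^2)}$ lower bound on $0$-rectangle covers, pass through the Yannakakis/Bousquet--Lagoutte--Thomass\'e dictionary to reach $\chi$ versus $\bp$ and the conondeterministic $\CIS$ lower bound. The one caveat, which you already flag, is that the graph witnessing the $\chi/\bp$ separation and the graph $H$ in $\CIS_H$ are in general different graphs produced by that dictionary, but this is a bookkeeping point the cited works handle, and it does not affect the stated $\tOmega(\cdot)$ bounds.
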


Let us pause here to appreciate how long a chain of reductions we have created to solve a graph theoretic problem by a reduction to another (hyper)graph theoretic problem, but fundamentally passing through complexity theory. That is, we have
\begin{enumerate}[noitemsep]
\item \label{i} Alon--Saks--Seymour problem, reduces to
\item \label{ii} clique vs.\ independent set problem, reduces to
\item \label{iii} \autoref{p1}: separation $\C_0 \gg \UC_1$ in query complexity, reduces to
\item \label{iv} \autoref{p2}: separation $\C_{\ol{0}},\C_{\ol{1}} \gg \C$ for a partial function, reduces to
\item \label{v} \autoref{p3}: $2$-colouring an intersecting hypergraph.
\end{enumerate}
Reduction 2-to-3 uses a lifting theorem (which is not known to have a converse) and 3-to-4 uses cheat sheets---both of these are inherently query/communication tools that do not have natural counterparts in classical combinatorics. The end result can be phrased as its own graph problem: Given an intersecting hypergraph and a 2-colouring whose monochromatic hitting sets are power-$\alpha$ larger than the rank, construct a graph which is an edge-disjoint union of $k$ bicliques but which has chromatic number $\exp(\tOmega(\log^{\alpha} k))$. It sounds to us magical that this is possible!

\paragraph{Other related work.}
Previously, near-optimal $\tOmega(\log^2 n)$ communication lower bounds for $\CIS_G$ were known in the deterministic~\cite{Goos2018a} and even randomised~\cite{Goos2018b} communication models. However, these results do not imply any bounds for the conondeterministic complexity and hence neither for the Alon--Saks--Seymour problem.

Given that superpolynomial separations exist for the Alon--Saks--Seymour problem in general, a recent line of work has aimed to find special graph classes where the separation is at most polynomial~\cite{Bousquet2014,Lagoutte2016,Bousquet2018,Chudnovsky2021}. In particular, it remains open whether the separation is polynomial for the class of perfect graphs.

\subsection{Applications: Separations in query complexity}\label{sec:querysep}

In query complexity, we get three improved separations involving the well-studied complexity measures \emph{degree} $\deg(f)$, \emph{sensitivity} $\s(f)$, and \emph{approximate degree} $\adeg(f)$ (defined in \autoref{sec:apps}). We refer to Aaronson et al.~\cite{Aaronson2021} for an up-to-date survey of the known relationships.

\begin{corollary} \label{cor:deg}
There exists a boolean function $f$ with $\C(f) \geq \tOmega(\deg(f)^2)$.
\end{corollary}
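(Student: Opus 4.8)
The plan is to obtain this separation as an immediate consequence of \autoref{thm:gap} together with one folklore observation: unambiguous DNF width upper-bounds exact polynomial degree, i.e.\ $\deg(f)\le\UC_1(f)$ for every boolean function $f$.

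First I would record that inequality. Fix an unambiguous DNF $F=C_1\lor\cdots\lor C_m$ of width $k\coloneqq\UC_1(f)$ computing $f$, and consider the real polynomial $p\coloneqq\sum_{i=1}^{m}p_i$, where $p_i$ is the $\{0,1\}$-valued indicator polynomial of the subcube $C_i^{-1}(1)$ --- the product, over the (at most $k$) literals of $C_i$, of $x_j$ for a positive literal and $1-x_j$ for a negative literal. Each $p_i$ has degree at most $k$, so $\deg(p)\le k$. Since $F$ is unambiguous, the subcubes $C_i^{-1}(1)$ are pairwise disjoint, so on every input $x\in\{0,1\}^n$ at most one summand $p_i(x)$ is nonzero; hence $p(x)=F(x)=f(x)$ for all $x$, giving $\deg(f)\le\deg(p)\le k=\UC_1(f)$.

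Then I would simply invoke \autoref{thm:gap}, which supplies a function $f$ with $\C_0(f)\ge\tOmega(\UC_1(f)^2)$. Combining this with $\C(f)=\max\{\C_0(f),\C_1(f)\}\ge\C_0(f)$ and with $\UC_1(f)\ge\deg(f)$ from the previous paragraph yields
\[
\C(f)\ \ge\ \C_0(f)\ \ge\ \tOmega(\UC_1(f)^2)\ \ge\ \tOmega(\deg(f)^2),
\]
which is the claimed separation.

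I do not expect a genuine obstacle here: all the difficulty is concentrated in \autoref{thm:gap}, and the only points worth double-checking are that the folklore identity $p\equiv f$ genuinely uses unambiguity (disjointness of the subcubes), not merely that $F$ is a DNF, and that the final asymptotic chaining points the right way --- it does, since we are lower-bounding $\C(f)$ by $\UC_1(f)^2$, which is itself at least $\deg(f)^2$, so the $\poly(\log)$ slack hidden in $\tOmega$ is harmless. One could instead try to read $\deg$ off the partial-function construction of \autoref{sec:quadratic} and propagate it through the cheat-sheet reduction, but routing through the already-established \autoref{thm:gap} is cleaner and loses nothing.
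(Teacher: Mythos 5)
Your argument is exactly the paper's: it derives the corollary from \autoref{thm:gap} together with the folklore inequality $\deg(f)\le\UC_1(f)$, which you verify by summing the disjoint-subcube indicator polynomials. The paper states this in one line without writing out the polynomial; your spelled-out verification and the final chaining are both correct.
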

\begin{corollary} \label{cor:sens}
There exists a boolean function $f$ with $\C(f) \geq \tOmega(\s(f)^3)$.
\end{corollary}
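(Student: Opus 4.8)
The plan is to derive \autoref{cor:sens} from \autoref{thm:gap} in the same spirit as \autoref{cor:deg}, by feeding cheap inequalities between sensitivity, certificate complexity, and unambiguous certificate complexity into the separation $\C_0(f)\ge\tOmega(\UC_1(f)^2)$, while squeezing out one extra factor of $\s(f)$ along the way. First I would record two folklore bounds valid for every total $f$: that $\s_1(f)\le\C_1(f)\le\UC_1(f)$, since a $1$-sensitive coordinate of an input must occur in every $1$-certificate of it and an unambiguous DNF is in particular a DNF; and that $\deg(f)\le\UC_1(f)$, since an unambiguous width-$k$ DNF expresses $f$ as a sum of indicator polynomials of pairwise disjoint subcubes, each of degree at most $k$. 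Feeding these into \autoref{thm:gap} already gives $\C(f)\ge\C_0(f)\ge\tOmega(\UC_1(f)^2)\ge\tOmega(\s_1(f)^2)$, which is essentially \autoref{cor:deg} together with ``half'' of \autoref{cor:sens}.

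Two things are still missing: control of the $0$-sensitivity $\s_0(f)$, for which only the useless direction $\s_0(f)\le\C_0(f)$ is automatic, and an upgrade of the exponent from $2$ to $3$. I would get both from a single stronger instantiation of \autoref{p2}: arrange that the partial function built in \autoref{sec:quadratic} (or a mild variant) is additionally \emph{low-sensitivity}, say $\s(f)\le\tO(\C(f)^{2/3})$, which under the cheat-sheet transfer to the total function should become $\s(f)\le\tO(\UC_1(f)^{2/3})$ --- cheat sheets only perturb sensitivity by polylogarithmic factors and leave the $\C_0$-versus-$\UC_1$ gap intact. Combining this with \autoref{thm:gap} then gives $\C(f)\ge\C_0(f)\ge\tOmega(\UC_1(f)^2)\ge\tOmega(\s(f)^3)$. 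If the raw construction is not low-sensitivity on the nose, the natural remedy is to plant inside it a gadget realising a large gap between block sensitivity and sensitivity on $0$-inputs --- a Rubinstein-type gadget, where $\operatorname{bs}$ on $0$-inputs is of order $\s^2$ --- placed so that the inflated $0$-certificate cost is charged to $\C_0$ while $\deg$, $\C_1$, $\UC_1$, $\s_1$, and the overall sensitivity stay essentially put.

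The step I expect to be the main obstacle is reconciling these three demands on a single function: $\C_0(f)$ as large as $\UC_1(f)^2$, $\UC_1(f)$ (hence $\deg(f)$, hence $\s_1(f)$) as small as $\sqrt{\C_0(f)}$, and $\s_0(f)$ smaller still, of order $\UC_1(f)^{2/3}$. Huang's sensitivity theorem, $\operatorname{bs}(f)=\poly(\s(f))$, cuts both ways here: it is what lets a gadget legitimately have block sensitivity polynomially above sensitivity on $0$-inputs, and --- via the general bound $\C(f)\le\s(f)\cdot\operatorname{bs}(f)$, under which the conjectured sharp form $\operatorname{bs}(f)=O(\s(f)^2)$ would yield $\C(f)=O(\s(f)^3)$ --- it identifies the cube as the right target rather than some higher power. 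What remains is bookkeeping: checking that composing in the gadget and performing the cheat-sheet step both preserve the inequality $\C_0\gtrsim\UC_1^2$ of \autoref{thm:gap}, which should be routine given that $\C_0$, $\C_1$, and $\UC_1$ are well behaved under composition and that cheat sheets disturb all of them by at most polylogarithmic factors.
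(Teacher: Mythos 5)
There is a real gap. The paper's proof of \autoref{cor:sens} is a one-liner: it cites \cite[Theorem~1]{BenDavid2017}, which is a black-box transformation that turns any total function with $\C_0(f)\ge\tOmega(\UC_1(f)^{\alpha})$ into a total function $g$ with $\C(g)\ge\tOmega(\s(g)^{\alpha+1})$. Plugging in \autoref{thm:gap} with $\alpha=2$ immediately gives the cube. Nothing about the $\Eah$/$\Hex$ constructions needs to be reopened, and no new gadget needs to be designed.

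Your proposal does not amount to a proof. You correctly identify the crux --- one must gain an extra factor of $\s$ on top of the $\UC_1^2$ bound, and the only automatic inequalities ($\s_1\le\C_1\le\UC_1$, $\deg\le\UC_1$) give $\s(g)\le\UC_1(g)$ at best, which only yields a square --- and you correctly name the cube as the natural target in view of $\C\le\s\cdot\operatorname{bs}$. But the step you then lean on, ``arrange that the partial function built in \autoref{sec:quadratic} is additionally low-sensitivity, $\s(f)\le\tO(\C(f)^{2/3})$, and check that cheat sheets preserve this,'' is exactly the hard content of the theorem you are trying to prove; you flag it yourself as ``the main obstacle'' and offer only a heuristic (Rubinstein-type gadgets to depress $\s_0$ without disturbing $\C_0$ vs.\ $\UC_1$), with no argument that the gadget composition preserves the $\C_0\gtrsim\UC_1^2$ inequality or that the cheat-sheet step only perturbs sensitivity polylogarithmically. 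The desensitization in \cite{BenDavid2017} is a specific, nontrivial construction and is not the Rubinstein gadget; asserting that ``some gadget should work'' leaves precisely the missing lemma unproved. In short, the paper closes the gap by citation to an existing theorem that does the $\alpha\mapsto\alpha+1$ upgrade; your sketch recognizes that such an upgrade is needed but does not supply it.
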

\begin{corollary} \label{cor:adeg}
There exists a boolean function $f$ with $\C(f) \geq \tOmega(\adeg(f)^3)$.
\end{corollary}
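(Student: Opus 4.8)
I would prove \autoref{cor:deg}, \autoref{cor:sens} and \autoref{cor:adeg} in one go, by checking that the function $f$ witnessing \autoref{thm:gap} also has small degree, sensitivity and approximate degree. Write $k \coloneqq \UC_1(f)$, so that $\C(f) \ge \C_0(f) = \tOmega(k^2)$. Then \autoref{cor:adeg} reduces to establishing $\adeg(f) = \tO(k^{2/3})$ (and \autoref{cor:sens}, \autoref{cor:deg} reduce to $\s(f)=\tO(k^{2/3})$ and $\deg(f)=\tO(k)$ respectively). Since nothing forces $\adeg$ to be small in terms of $\C$ or $\UC_1$ in general --- the $n$-bit $\mathrm{OR}$ has $\UC_1=1$ but $\adeg=\Theta(\sqrt n)$ --- the whole argument must exploit the structure of the construction.

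Recall from \autoref{thm:equivalences} that $f$ is the cheat-sheet function $g_{\mathrm{CS}}$ of \cite{Aaronson2016} built from the partial function $g$ of \autoref{sec:quadratic}, a solution to \autoref{p2} with $\C(g)=\tO(k)$ and $\C_{\ol 0}(g,x^*),\C_{\ol 1}(g,x^*)=\tOmega(k^2)$. I would bound $\adeg(g_{\mathrm{CS}})$ by unwinding the cheat sheet: up to a $\poly\log$-fold conjunction, $g_{\mathrm{CS}}$ evaluates $g$ on each of $c=\poly\log$ copies to obtain an address, indexes into the addressed cell, and verifies the short certificates stored there. Routing this through bounded-error quantum query complexity --- where composition behaves well and cheat sheets are known to be preserved up to $\poly\log$ factors --- gives $\adeg(f)\le 2\,Q(f)=\tO\!\big(Q(g)+\sqrt{s}\,\big)$, where $s$ is the cell size and the $\sqrt s$ term is the cost of Grover-searching one cell for an inconsistency. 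Alternatively one argues directly with approximate degree, using that it composes up to polylogarithmic factors over the $\poly\log$ layers and that verifying a width-$w$ conjunctive certificate has approximate degree $O(\sqrt w)$ rather than $w$; this yields the same bound $\adeg(f)=\tO(\adeg(g)+\sqrt s)$.

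The crux --- and the step I expect to be genuinely delicate --- is to show that the explicit $g$ of \autoref{sec:quadratic} has $Q(g)$ (equivalently $\adeg(g)$) and cell size $s$ both at most $\tO(k^{2/3})$, which is exactly the budget needed to upgrade the quadratic gap to a cubic one. A crude accounting only gives $s=\tO(\C(g))=\tO(k)$ and $Q(g)\le\tO(k)$, yielding merely $\C(f)=\tOmega(\adeg(f)^2)$; the extra power has to be extracted from the concrete combinatorics underlying \autoref{p2}/\autoref{p3} --- most plausibly, the cells need only store $\tO(k^{2/3})$-size data (the $\tOmega(k^2)$ blow-up being confined to the $\ol 0,\ol 1$-certificates of the single $*$-input, which inflate $\C_0(f)$ only through the address layer), together with a $\tO(k^{2/3})$-query bounded-error quantum algorithm for $g$, presumably a polynomial speedup over the deterministic certificate cost of the underlying ``hit every edge of an intersecting hypergraph'' primitive. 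Granting this, \autoref{cor:sens} follows the same way, using that flipping one literal of a stored certificate is sensitive so that cell-verification contributes $\tO(s)=\tO(k^{2/3})$ to $\s(f)$ (and that $\s(g)=\tO(k^{2/3})$ as well); and \autoref{cor:deg} follows only with the weaker exponent $2$, since the exact degree of cell-verification is $s$ itself and the exact degree of $g$ is $\tO(k)$, not $\tO(k^{2/3})$, with no square-root saving available for $\deg$.
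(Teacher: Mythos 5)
Your proposal is built on the wrong construction. You propose to take the function $f$ witnessing \autoref{thm:gap} --- i.e.\ the cheat-sheeted version of the quadratic $\Eah_n$ from \autoref{sec:quadratic} --- and try to show $\adeg(f)=\tO(\UC_1(f)^{2/3})$, so that the quadratic $\C$-vs.-$\UC_1$ gap upgrades to a cubic $\C$-vs.-$\adeg$ gap. But the paper explicitly rules this out: \autoref{sec:querysep} states ``We do not know how to derive it from our quadratic solution to \autoref{p1}. Instead, we will present (\autoref{sec:hex}) an alternative solution with $\alpha=1.5$ \ldots\ which is even simpler and more structured \ldots\ and hence more useful in deriving the cubic gap.'' The paper's actual route, in \autoref{sec:apps}, is to run \ref{p2}$\Rightarrow$\ref{p1} on $\Hex_n$ to obtain a total $g$ with $\C_0(g)\ge\tOmega(\UC_1(g)^{1.5})$, and then to prove $\adeg(g)\le\tO(\sqrt{\UC_1(g)})$. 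The cubic exponent arises as $1.5\times 2$, not as $2/(2/3)$.

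The place where your accounting actually breaks is one you yourself flag. You note that ``a crude accounting only gives $s=\tO(\C(g))=\tO(k)$ and $Q(g)\le\tO(k)$,'' which only yields a quadratic gap, and then you \emph{conjecture} that the cell size and the quantum query complexity of the inner function can both be forced down to $\tO(k^{2/3})$ by exploiting the combinatorics of \autoref{sec:quadratic}. There is no support for this: the cheat-sheet cell must store a full certificate of the inner partial function, which has size $\Theta(\C(g))=\tTheta(k)$, not $\tO(k^{2/3})$; nothing in the $\Eah$ construction shrinks this. So the conjectural ``most plausibly, the cells need only store $\tO(k^{2/3})$-size data'' step is false as stated. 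Relatedly, routing through bounded-error quantum query complexity and invoking Grover on a cell does not rescue the parameter count: Grover gives a $\sqrt{s}$ saving on the cell verification but not on the address computation, and you still face $Q(g)\approx k$ for the inner function.

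The specific structural feature that actually makes the square-root saving available --- and that your proposal misses --- is that a $\Hex$ certificate is a \emph{path}: a list of $O(\C(\Hex))$ adjacent matrix entries, each checked by an $O(\log N)$-query predicate. Verifying a stored certificate is therefore an $\AND$ of $O(\C(\Hex))$ cheap predicates, and Nisan--Szegedy gives approximate degree $O(\sqrt{\C(\Hex)})$ for that $\AND$; composing with the $O(\log N)$-degree inner polynomials gives $\adeg(g_c)=\tO(\sqrt{\C(\Hex)})=\tO(\sqrt{\UC_1(g)})$. Summing the (error-boosted) polynomials for the pairwise-inconsistent $g_c$ gives the same bound for $g$. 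The $\Eah$ certificates from \autoref{sec:quadratic} do not obviously have this ``$\AND$ of cheap local tests'' structure, which is exactly why the paper abandons the quadratic construction for this corollary and accepts the weaker $1.5$ exponent in exchange for a function whose certificates are polynomially compressible in approximate degree.
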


\autoref{cor:deg} follows from \autoref{thm:gap} and the simple fact that $\UC_1(f)\geq \deg(f)$; in particular, our quadratic separation improves over the classic power-$1.63$ separation due to Nisan, Kushilevitz, and Wigderson~\cite{Nisan1995}. \autoref{cor:sens} follows automatically from~\cite[Theorem 1]{BenDavid2017}; that work exhibited a power-$2.22$ separation, which was the previous record.

\autoref{cor:adeg} is the trickiest. We do not know how to derive it from our quadratic solution to \autoref{p1}. Instead, we will present (\autoref{sec:hex}) an alternative solution with $\alpha=1.5$ (already beating $\alpha\approx 1.22$ from prior work), which is even simpler and more structured than our quadratic one and hence more useful in deriving the cubic gap for \autoref{cor:adeg}. The previous best separation here was quadratic as witnessed by the $n$-bit $\AND$ function.

\section{Quadratic solution to \autoref{p2}} \label{sec:quadratic}

In this section, we will describe our near-quadratic solution to \autoref{p2}. Namely, we will construct a partial boolean function $\Eah_n\colon\{0,1\}^{O(n^2)}\to\{0,1,*\}$ and an input $z\in \Eah_n^{-1}(*)$ such that
\begin{align}
\C(\Eah_n) ~\coloneqq~ \max\big\{\C_0(\Eah_n),\,\C_1(\Eah_n)\,\big\} ~&\leq~ \tO(n), \label{eq:eah-ub}\\
\min\big\{\C_{\ol{0}}(\Eah_n,z),\,\C_{\ol{1}}(\Eah_n,z)\,\big\} ~&\geq~ \Omega(n^2). \label{eq:eah-lb}
\end{align}
Our construction centers around a hypergraph with a certain \emph{pseudorandom} property as formalised in \autoref{lem:eah} below. We first use this lemma in \autoref{sec:construction} to construct the function $\Eah_n$ and then we prove the lemma in \autoref{sec:eah-graph}.

\begin{lemma}[EAH graphs] \label{lem:eah}
There exists an $n$-uniform hypergraph $G=(V,E)$ with $|V|=n^2$ vertices and $|E|=n^2$ edges that satisfies the following \emph{``everywhere almost-hittable'' (EAH)} property: For every set $F\subseteq V$ of size $|F|\leq \frac{1}{100}n^2$, there exists a set $H\subseteq V$ such that
\begin{itemize}[noitemsep,label=$-$]
\item $H$ has size at most $\tilde{n}\coloneqq 100n\log n$, 
\item $H$ is disjoint from $F$,
\item $H$ intersects all but at most $n$ of the edges in $E$.
\end{itemize}
(In other words, for every set of ``forbidden'' vertices $F$, there is a small hitting set $H$ that does not use the forbidden vertices and that hits almost all of the edges.)
\end{lemma}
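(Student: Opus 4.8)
The plan is to build $G$ by the probabilistic method. Fix a vertex set $V$ with $|V|=n^2$ and let $E=\{e_1,\dots,e_{n^2}\}$, where each $e_i$ is an independent, uniformly random $n$-element subset of $V$. (Two of these sets coincide only with probability $O(n^2/\binom{n^2}{n})=o(1)$, so we may assume the $e_i$ are distinct and $G$ is a genuine $n$-uniform hypergraph with $n^2$ edges.) I would then show that $G$ has the EAH property with positive probability. The one real difficulty is that the EAH property quantifies over all forbidden sets $F$, of which there are up to $2^{n^2}$; so any event I union-bound over $F$ must fail with probability $2^{-\omega(n^2)}$, and the constants in the statement (the $\tfrac1{100}$, the slack "$n$", and the target size $\tilde n = 100n\log n$) are exactly what is needed to make such a bound affordable while keeping the hitting set of size $\tilde O(n)$.

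The key structural event I would establish is: $(\star)$ \emph{for every $F\subseteq V$ with $|F|\le \tfrac1{100}n^2$, at most $n/2$ edges $e\in E$ satisfy $|e\cap F|\ge n/2$.} To see $(\star)$ holds with high probability, fix such an $F$. For a single random edge $e$, the intersection size $|e\cap F|$ is hypergeometric with mean $\le n/100$, hence (since $n/2$ is above the mean) dominated in the upper tail by a $\mathrm{Bin}(n,1/100)$ variable, so $\Pr[\,|e\cap F|\ge n/2\,]\le \binom{n}{n/2}\,100^{-n/2}\le 5^{-n}$. Since the $n^2$ edges are independent, the number $Y_F$ of edges with $|e\cap F|\ge n/2$ satisfies $\Pr[Y_F\ge n/2]\le \binom{n^2}{n/2}\,(5^{-n})^{n/2}\le n^{n}\,5^{-n^2/2}$. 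A union bound over the at most $2^{n^2}$ choices of $F$ then gives failure probability at most $2^{n^2}\cdot n^{n}\cdot 5^{-n^2/2}=o(1)$, since $5^{-1/2}<2^{-1}$ with room to spare. Hence I may fix a single hypergraph $G$ satisfying $(\star)$ for the rest of the argument.

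It remains to show that any $G$ satisfying $(\star)$ is EAH. Fix a forbidden set $F$ with $|F|\le\tfrac1{100}n^2$ and write $W\coloneqq V\setminus F$; by $(\star)$ all but at most $n/2$ edges $e$ satisfy $|e\cap W|> n/2$ — call these the \emph{good} edges. Form $H$ by including each vertex of $W$ independently with probability $p\coloneqq 50\log n/n$. Then $H\cap F=\emptyset$ automatically; moreover $\E|H|\le pn^2=50n\log n$, so a Chernoff bound gives $|H|\le 100n\log n=\tilde n$ except with probability $e^{-\Omega(n\log n)}$. For a good edge $e$ we have $\Pr[H\cap e=\emptyset]\le(1-p)^{n/2}\le e^{-pn/2}\le n^{-10}$ (with a suitable constant in $p$), so the expected number of good edges missed by $H$ is at most $n^2\cdot n^{-10}=o(1)$, whence by Markov's inequality $H$ hits every good edge except with probability $o(1)$. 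Therefore with positive probability $H$ simultaneously has size at most $\tilde n$, is disjoint from $F$, and misses only non-good edges, of which there are at most $n/2\le n$. Such an $H$ exists, establishing the EAH property for $G$.

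I expect the second paragraph — the uniform control of $(\star)$ over all $2^{n^2}$ forbidden sets — to be the crux: everything hinges on the per-$F$ failure probability beating $2^{-n^2}$, which forces the specific choice of the density threshold $\tfrac1{100}$ and the "$n/2$ bad edges" bookkeeping. Once $(\star)$ is in hand, the construction of $H$ by independent sampling from $V\setminus F$ is routine, and the logarithmic overhead in $\tilde n=100n\log n$ is precisely the coupon-collector-type cost of hitting $n^2$ edges simultaneously.
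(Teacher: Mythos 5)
Your proof is correct, and it is a fully worked-out version of the random construction that the paper mentions in one sentence and then deliberately bypasses: ``It is not hard to prove that a random $n$-uniform hypergraph with the required number of vertices/edges satisfies the conditions of Lemma~4. However, we give here an explicit construction.'' The paper takes $V=[n]\times[n]$ and defines $E=\{e_h : h\in\mathcal{H}\}$ for a pairwise independent hash family $\mathcal{H}$ of size exactly $n^2$, where $e_h=\{(i,h(i)):i\in[n]\}$. Because $|\mathcal{H}|=n^2$, a single Chebyshev bound $\Pr_{h\sim\mathcal{H}}[e_h\in M]\le 1/n$ converts directly into the deterministic counting bound $|M|\le n$ for \emph{every} fixed $F$ --- no union bound over the $2^{n^2}$ choices of $F$ is needed. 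Your approach replaces pairwise independence with full independence of the $n^2$ random edges and pays for it with a union bound over all $F$, correctly observing that the per-$F$ failure probability $\approx n^n 5^{-n^2/2}$ comfortably beats $2^{-n^2}$. The trade-off: your construction is non-explicit but more elementary (no hash families, no Chebyshev), whereas the paper's is explicit and avoids the heavy union bound. Both then choose $H$ by random sampling in $V\setminus F$ and a Chernoff/union-bound argument, so that part is essentially identical. Two small nits that don't affect correctness: your per-edge tail estimate $\Pr[|e\cap F|\ge n/2]\le\binom{n}{n/2}100^{-n/2}$ treats the hypergeometric as if it were exactly $\mathrm{Bin}(n,1/100)$; a clean way to justify the binomial domination is to sample $e$ sequentially without replacement and note each new vertex lands in $F$ with probability at most $\tfrac{|F|}{|V|-n}\le\tfrac{1}{99}$, which shifts your constants slightly but leaves the union bound with ample slack. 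And when you ``assume the $e_i$ are distinct,'' it is cleaner to note that distinctness fails with probability $o(1)$ and simply intersect with the complement of that event, since all your other bad events are far smaller.
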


\subsection{Quadratic separation from an EAH graph.} \label{sec:construction}
Fix an $n$-uniform EAH hypergraph $G=(V,E)$ given by \autoref{lem:eah}. We define a partial function $\Eah_n\colon\{0,1\}^{V\cup E}\to\{0,1,*\}$ that has an input variable $x_v$ for each vertex $v$ and an input variable~$x_e$ for each edge $e$. We set $f(x)\coloneqq 1$ iff there is some edge $e$ such that $x_e=1$ and $x_v=1$ for all $v\in e$. We set $\Eah_n(x)\coloneqq 0$ iff $x$ is not a 1-input and there is a certificate of this fact that uses at most $2\tilde{n}$ bits. If neither of these cases hold, we set $\Eah_n(x)\coloneqq *$. In summary,
\[
\Eah_n(x) ~\coloneqq~
\begin{cases}
	~1 & \text{if there is an edge $e$ such that $x_e=1$ and $x_v=1$ for all $v\in e$},\\
	~0 & \text{if } \C_{\ol{1}}(\Eah_n,x)\leq 2\tilde{n},\\
	~* & \text{otherwise}.
\end{cases}
\]

By construction, $\C_1(\Eah_n) = n+1$ and $\C_0(\Eah_n) \leq 2\tilde{n}$, which verifies \eqref{eq:eah-ub}. It remains to find a~$*$-input $z$ satisfying \eqref{eq:eah-lb}. Consider an input $z$ where $z_v=1$ for all vertices $v$ and $z_e=0$ for all edges~$e$. Clearly $\Eah_n(z)\neq 1$. Moreover, $\C_{\ol{1}}(\Eah_n,z)=n^2$ since any $\ol{1}$-certificate needs to read all the edge-variables $z_e$. Hence $\Eah_n(z)=*$. The following claim verifies \eqref{eq:eah-lb} and completes the proof.

\begin{claim}
$\C_{\ol{0}}(\Eah_n,z) \geq \Omega(n^2)$.
\end{claim}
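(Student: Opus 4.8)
The goal is to show that any $\ol{0}$-certificate $\rho$ for $z$ must have size $\Omega(n^2)$. Recall that $z$ has all vertex-variables set to $1$ and all edge-variables set to $0$, and that a $\ol{0}$-certificate for $z$ is a partial assignment $\rho$ consistent with $z$ such that every total input $x$ consistent with $\rho$ satisfies $\Eah_n(x)\in\{0,*\}$ — equivalently, no completion of $\rho$ is a $1$-input. Suppose toward a contradiction that $\rho$ has size at most $\frac{1}{100}n^2$. The plan is to use $\rho$ to define a forbidden set $F$, apply the EAH property of $G$ to get a small hitting set $H$ avoiding $F$, and then exhibit a completion $x$ of $\rho$ that is simultaneously a $1$-input (contradicting that $\rho$ is a $\ol{0}$-certificate) and has $\C_{\ol{1}}(\Eah_n,x)\leq 2\tilde n$ — wait, more carefully: we need $x$ consistent with $\rho$ that is a $1$-input, which alone already contradicts $\rho$ being a $\ol{0}$-certificate. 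Let me restructure.

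**The core argument.** Let $F$ be the set of vertices $v$ such that $\rho$ sets $x_v=0$ (i.e. the vertices that $\rho$ forbids us from setting to $1$). Since $|\rho|\leq\frac{1}{100}n^2$, certainly $|F|\leq\frac{1}{100}n^2$, so the EAH property yields a set $H\subseteq V$ of size at most $\tilde n$, disjoint from $F$, hitting all but at most $n$ edges of $E$. Now consider the input $x$ defined by: $x_v=1$ for all $v\in V$, and for the edge-variables, set $x_e=1$ exactly for those edges $e$ that are \emph{not} hit by $H$ — there are at most $n$ such edges — and $x_e=0$ otherwise. This $x$ is a $1$-input of $\Eah_n$ as long as at least one unhit edge exists; if $H$ hits all edges we instead need a slightly different completion, so let me handle that: actually we want $x$ to also be consistent with $\rho$, which constrains both some vertex-variables (to $0$, exactly on $F$) and some edge-variables. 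Setting $x_v=1$ on $V\setminus F$ and $x_v=0$ on $F$ respects $\rho$ on vertices. The issue is edge-variables fixed by $\rho$; but here is the key point — I think the intended argument is that $x$ is a $1$-input \emph{or} a $\ol 0$-certificate forces too much reading. Let me give the clean version: if $\rho$ has size $< n^2$ it cannot read all edge-variables, but more to the point, a completion of $\rho$ that is a $1$-input requires an edge $e$ with $x_e=1$ and all $v\in e$ set to $1$; $\rho$ blocks this only if for every edge $e$, either $\rho$ sets $x_e=0$ or $\rho$ sets some $x_v=0$ for $v\in e$. So the set of edges "blocked at a vertex" is exactly those meeting $F$, and the remaining edges must all have $x_e$ fixed to $0$ by $\rho$.

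**Counting.** Therefore, for $\rho$ to be a valid $\ol 0$-certificate, $\rho$ must set $x_e=0$ for every edge $e$ that is disjoint from $F$. So $|\rho|\geq |F_{\mathrm{vars}}| + |\{e : e\cap F=\emptyset\}|$, where $|F_{\mathrm{vars}}|$ counts the vertex-variables read (which is $|F|$, possibly plus vertices read and set to $1$, but we only need the lower bound). If $|F|$ is small, then a constant fraction of the $n^2$ edges are disjoint from $F$ and $\rho$ must read all of them: we need the combinatorial fact, derivable from the EAH property or directly, that few forbidden vertices cannot meet many edges. Concretely: apply EAH to $F$ to get $H$ of size $\leq\tilde n$ hitting all but $\leq n$ edges while avoiding $F$; this means all but at most $n$ edges are hit by $H$, hence (since $H\cap F=\emptyset$) all but at most $n$ edges contain a vertex outside $F$ — that's automatic and unhelpful. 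The right statement: I expect one shows that if $|F| < \frac{1}{100}n^2$ then at most, say, $n^2/2$ edges can meet $F$, so at least $n^2/2$ edges are disjoint from $F$ and thus have their edge-variable read by $\rho$, giving $|\rho|\geq n^2/2 = \Omega(n^2)$, contradiction. That bound on $|\{e: e\cap F\neq\emptyset\}|$ should follow from a degree/regularity property of the EAH graph (each vertex lies in few edges), which is why the claim is proved here rather than in \autoref{sec:eah-graph}; alternatively it may follow by applying EAH to $F$ and arguing that the hitting set $H$ avoiding $F$ certifies that $F$ itself is far from being a hitting set, i.e. $F$ misses $\geq n^2 - (\text{something})$ edges. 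The main obstacle I anticipate is pinning down exactly this counting step — relating "$|\rho|$ small" to "$F$ meets few edges" — and whether it needs only the stated EAH property or an additional low-degree property of $G$ established in the construction; if EAH alone does not suffice, I would add to \autoref{lem:eah} that $G$ is (nearly) regular with each vertex in $O(1)$ edges, which makes $|\{e:e\cap F\neq\emptyset\}| = O(|F|) = O(n^2/100)$ immediate.
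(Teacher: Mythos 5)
Your proof has a fatal misreading of the notation that derails the whole argument. The paper defines the shorthand $\ol{0}$ and $\ol{1}$ for the output sets $\{1,*\}$ and $\{0,*\}$ respectively: $\ol{0}$ is the \emph{complement} of $\{0\}$, so a $\ol{0}$-certificate for $z$ must witness that every completion of $\rho$ has value $1$ or $*$ --- i.e.\ that \emph{no} completion is a $0$-input. You have it backwards: you write that a $\ol{0}$-certificate guarantees $\Eah_n(x)\in\{0,*\}$, i.e.\ that no completion is a $1$-input. Under your (incorrect) reading the claim would be trivial: since $z$ sets all vertex-variables to $1$, any $\rho$ consistent with $z$ that misses even one edge-variable $x_e$ can be completed to a $1$-input by setting $x_e=1$; so $\rho$ would have to read all $n^2$ edge-variables, giving $\C\geq n^2$ with no use of the EAH property at all --- a clear sign something is off.

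A second, compounding error: you define $F$ as the set of vertices that $\rho$ sets to $0$. But $\rho$ is required to be consistent with $z$, and $z$ has $z_v=1$ for \emph{every} vertex $v$, so $\rho$ can never set a vertex-variable to $0$. Your $F$ is therefore empty, and the EAH property applied to $F=\emptyset$ gives you nothing usable. The correct choice, which is what the paper uses, is to take $F$ to be the set of vertices whose variables $\rho$ \emph{reads} (all of which show the value $1$). Then $|F|\leq|\rho|$, EAH gives a small $H\subseteq V\setminus F$ hitting all but a set $M$ of at most $n$ edges, and --- this is the step you never reach --- one \emph{flips the vertex-variables indexed by $H$ from $1$ to $0$} to produce an input $w$ still consistent with $\rho$ (since $H\cap F=\emptyset$, none of those variables were read). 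This $w$ is a genuine $0$-input: reading the $0$s on $H$ together with the $0$-valued edge-variables in $M$ yields a $\ol{1}$-certificate of size $\tilde n+n\leq 2\tilde n$, so by the definition of $\Eah_n$ we have $\Eah_n(w)=0$, contradicting that $\rho$ is a $\ol{0}$-certificate. Your proposal aims in exactly the opposite direction --- constructing a $1$-input --- and never touches the actual content of the claim, which is why you end up flailing at the counting step and speculating about extra regularity assumptions that are in fact not needed.
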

\begin{proof}
Let $\rho$ be a partial input consistent with $z$ that has size $\frac{1}{100}n^2$. We show that $\rho$ cannot be a $\ol{0}$-certificate. Denote by $F\subseteq V$, $|F|\leq \frac{1}{100}n^2$, the set of vertices read by $\rho$. By the EAH property, there is a set $H\subseteq V\setminus F$, $|H|\leq \tilde{n}$, that hits all edges in $E\setminus M$ for some~$M\subseteq E$, $|M|\leq n$. Consider flipping the bits of $z$ corresponding to vertices $H$ to $0$. This gives us a string $w$ that is still consistent with $\rho$. However, we claim that $\Eah_n(w)=0$ (which shows that $\rho$ is not a $\ol{0}$-certificate, completing the proof). The reason is that we can read the vertex-variables in $H$ (which are all $0$ in $w$) together with the edge-variables in $M$ and this forms a~$\ol{1}$-certificate for $w$ of size at most $\tilde{n}+n\leq 2\tilde{n}$.
\end{proof}

\subsection{Existence of EAH graphs (Proof of \autoref{lem:eah})} \label{sec:eah-graph}

It is not hard to prove that a random $n$-uniform hypergraph with the required number of vertices/edges satisfies the conditions of \autoref{lem:eah}. However, we give here an explicit construction.

Define $V\coloneqq [n]\times[n]$ and think of these vertices as being arranged in an $n$-by-$n$ grid. We will consider size-$n$ edges that will contain exactly one vertex from each row of the grid. Thus the edges are in $1$-to-$1$ correspondence with functions $h\colon [n]\to[n]$ where a function $h$ corresponds to the edge $e_h\coloneqq \{(i,h(i)) : i\in [n]\}$. Let $\calH$, $|\calH|=n^2$, be any family of \emph{pairwise independent hash functions}, that is, satisfying (we write $h\sim\calH$ for a uniform random $h\in\calH$)
\begin{align}
\forall i,j\colon\quad&
\Pr_{h\sim\calH}[h(i)=j] ~=~1/n, \label{eq:unif} \\
\forall i\neq i',j,j'\colon\quad&
\Pr_{h\sim\calH}[h(i)=j \text{ and } h(i')=j'] ~=~ 1/n^2. \label{eq:pairwise}
\end{align}
(For the most basic example, assume $n$ is a prime power and identify $[n]$ with the field $\mathbb{F}_n$. Define a function family $\calH=\{h_{a,b}\}$ indexed by $a,b \in \mathbb{F}_n$ such that $h_{a,b}(i) \coloneqq ai+b$.) We let $E\coloneqq\{e_h:h\in \calH\}$. This completes the construction of $G\coloneqq(V,E)$.

To verify the EAH property, fix a forbidden set of vertices $F \subseteq [n]\times[n]$ of size $|F|\leq \frac{1}{100}n^2$. Define a set of \emph{mostly-forbidden} edges $M \coloneqq \{e\in E : |e\cap F| \geq \frac{9}{10}n \}$. The following two claims finish the proof of \autoref{lem:eah}.
\begin{claim}
$|M|\leq n$.
\end{claim}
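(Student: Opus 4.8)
The plan is to bound $|M|$ by a counting argument on the forbidden set $F$. An edge $e_h$ lies in $M$ precisely when $h$ agrees with $F$ on at least $\frac{9}{10}n$ of the rows, i.e.\ when $|\{i : (i,h(i))\in F\}| \geq \frac{9}{10}n$. For each row $i$, let $F_i \coloneqq \{j : (i,j)\in F\}$ be the column-positions forbidden in row $i$, and let $\phi_i \coloneqq |F_i|/n$ be the fraction of row $i$ that is forbidden; note $\sum_i \phi_i = |F|/n \leq \frac{1}{100}n$, so the average $\phi_i$ is at most $\frac{1}{100}$.

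First I would compute, for a uniformly random $h\sim\calH$, the expected number of rows on which $h$ hits $F$: by the uniformity property \eqref{eq:unif}, $\E_{h}\big[\,|\{i : (i,h(i))\in F\}|\,\big] = \sum_i \Pr_h[h(i)\in F_i] = \sum_i \phi_i \leq \frac{1}{100}n$. Then the fraction of edges in $M$ is $\Pr_h[\,|\{i:(i,h(i))\in F\}| \geq \frac{9}{10}n\,]$, which by Markov's inequality is at most $(\frac{1}{100}n)/(\frac{9}{10}n) = \frac{1}{90}$. Multiplying by $|E| = n^2$ gives $|M| \leq \frac{1}{90}n^2$ — which is far weaker than the claimed $|M|\leq n$. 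So plain Markov on the first moment is not enough; I expect the pairwise independence \eqref{eq:pairwise} is exactly what closes this gap, and that is the main obstacle.

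To get the stronger bound I would use a second-moment / Chebyshev argument. Let $Y_h \coloneqq |\{i : (i,h(i))\in F\}| = \sum_i \mathbf{1}[h(i)\in F_i]$. The mean is $\mu \coloneqq \E[Y_h] = \sum_i\phi_i \leq \frac{1}{100}n$. For the variance, the indicator events $\mathbf{1}[h(i)\in F_i]$ and $\mathbf{1}[h(i')\in F_{i'}]$ for $i\neq i'$ are independent by \eqref{eq:pairwise} (summing \eqref{eq:pairwise} over $j\in F_i$, $j'\in F_{i'}$ gives $\phi_i\phi_{i'}$, matching the product of marginals), so $\Var[Y_h] = \sum_i \phi_i(1-\phi_i) \leq \sum_i \phi_i = \mu \leq \frac{1}{100}n$. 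An edge is in $M$ iff $Y_h \geq \frac{9}{10}n \geq \mu + \frac{8}{10}n$ (using $\mu \leq \frac{1}{100}n$), so by Chebyshev $\Pr_h[e_h\in M] \leq \Var[Y_h]/(\frac{8}{10}n)^2 \leq (\frac{1}{100}n)/(\frac{64}{100}n^2) = O(1/n)$. Hence $|M| = n^2 \cdot \Pr_h[e_h\in M] = O(n)$, and tracking the constants carefully (Chebyshev is lossy, but $\frac{9}{10}n$ versus $\mu \leq \frac{1}{100}n$ gives a lot of room) should yield $|M|\leq n$ outright; if the constants are slightly too large one simply adjusts the threshold $\frac{9}{10}$ in the definition of $M$ or the constant $\frac{1}{100}$ in the hypothesis on $|F|$, neither of which affects the downstream argument. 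The one subtlety to check is that the pairwise-independence identity \eqref{eq:pairwise} genuinely gives pairwise \emph{independence} of the Bernoulli variables (not just some weaker correlation bound), which it does by a direct summation as noted above; that is the crux of why an explicit pairwise-independent family suffices here rather than needing full independence.
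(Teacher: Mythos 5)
Your proof is correct and uses essentially the same approach as the paper: exploit pairwise independence to get $\Var[\cdot] = \sum_i \Var[\cdot]$, then apply Chebyshev's inequality. The only cosmetic difference is that the paper first preprocesses $F$ by an averaging argument—it restricts attention to the $n' = n/2$ rows where $F$ has density at most $p = 1/50$, then pads $F$ so each such row has density \emph{exactly} $p$, making the indicators identically distributed and the variance literally $n'p(1-p)$—whereas you carry the per-row densities $\phi_i$ through directly and bound $\Var[Y_h] = \sum_i \phi_i(1-\phi_i) \leq \mu$. Both give $\Pr[e_h \in M] \leq O(1/n)$ with constants comfortably below $1$, so both yield $|M|\leq n$; your version is if anything a bit more streamlined since it avoids the padding step, at the cost of a marginally less self-contained variance formula.
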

\begin{proof}
By averaging, for at least half the rows, $F$ contains at most $pn$, $p\coloneqq 1/50$, vertices from each of those rows. Suppose for simplicity that this happens for the first $n'\coloneqq n/2$ rows, and suppose further that $F$ contains \emph{exactly} $pn$ vertices from each such row (which can be ensured by adding more vertices to $F$).  Let $h\sim\calH$ henceforth. Define for $i \in [n']$ an indicator random variable $X_i\in\{0,1\}$ such that $X_i=1$ iff $(i,h(i)) \in F$. Then \eqref{eq:unif} implies that $\Pr[X_i=1] = p$ and~\eqref{eq:pairwise} implies that the $X_i$ are pairwise independent. Consider $X \coloneqq \sum_{i=1}^{n'} X_i$. This has expectation $\E[X]=pn'$ and variance $\Var[X] = n'p(1-p)$ like the $p$-biased binomial distribution. We calculate
\begin{align*}\textstyle
\Pr[e_h \in M]
~\leq~ \Pr[ X \geq \frac{4}{5}n']
~\leq~ \Pr[ X - \E[X] \geq \frac{1}{2}n' ]
~\leq~ \Var[X]/(\frac{1}{2}n')^2
~\leq~  4p/n'
~\leq~ 1/n,
\end{align*}
where we used Chebyshev's inequality. We conclude that $|M|=n^2\cdot\Pr[e_h\in M]\leq n$.
\end{proof}

\begin{claim}
There exists a set $H\subseteq V\setminus F$ of size $\tilde{n}$ that intersects every edge in $E\setminus M$.
\end{claim}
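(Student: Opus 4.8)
The plan is to construct $H$ greedily, exploiting that the set $M$ was defined precisely so that every edge we still have to hit is ``mostly unforbidden''. Concretely, if $e\in E\setminus M$ then $|e\cap F|<\tfrac{9}{10}n$, so $e$ contains strictly more than $\tfrac{1}{10}n$ vertices of $V\setminus F$. I will also use the crude bounds $|E\setminus M|\le n^2$ and $|V\setminus F|\le n^2$.

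First I would run the obvious greedy process: start with $H=\emptyset$, and as long as some edge of $E\setminus M$ is unhit, add to $H$ a vertex of $V\setminus F$ that lies in the largest number of currently-unhit edges of $E\setminus M$. If $U\subseteq E\setminus M$ is the current set of unhit edges, then $\sum_{e\in U}|e\cap(V\setminus F)|\ge\tfrac{1}{10}n\,|U|$, and this double-counts incidences with the at most $|V\setminus F|\le n^2$ allowed vertices, so some allowed vertex hits at least $\tfrac{|U|}{10n}$ edges of $U$ (such a vertex is automatically not already in $H$, since vertices of $H$ meet no edge of $U$). Thus each step shrinks $|U|$ by a factor at most $1-\tfrac{1}{10n}$, giving $|U|\le n^2(1-\tfrac{1}{10n})^t\le n^2e^{-t/(10n)}$ after $t$ steps; this is below $1$, hence $0$, once $t>20n\ln n$. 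Taking $t=\lceil 20n\ln n\rceil\le\tilde{n}$ then produces $H\subseteq V\setminus F$ with $|H|\le\tilde{n}$ hitting all of $E\setminus M$, and if exact size $\tilde{n}$ is wanted I would pad with arbitrary vertices of $V\setminus F$ (there are at least $\tfrac{99}{100}n^2\ge\tilde{n}$ of them, for $n$ large).

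I do not expect any genuine obstacle here — the content is a one-line averaging step plus the standard ``geometric decay'' counting — so the only care needed is in the constant bookkeeping: checking $20n\ln n\le 100n\log n=\tilde{n}$, that the fraction $\tfrac{1}{10}$ coming from the definition of $M$ survives, and that $|V\setminus F|\le n^2$ in the averaging bound; all hold with room to spare once $n$ is large enough, and small $n$ are vacuous. As an alternative I would note that a one-shot probabilistic argument also works: keep each vertex of $V\setminus F$ independently with probability $\tfrac{30\log n}{n}$, so that $\E|H|\le 30n\log n$ while a fixed edge of $E\setminus M$ is missed with probability at most $(1-\tfrac{30\log n}{n})^{n/10}\le n^{-3}$; a union bound over the $\le n^2$ such edges together with Markov's inequality on $|H|$ then give both properties simultaneously with positive probability.
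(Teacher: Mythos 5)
Your greedy argument is correct and genuinely different from the paper's. The paper instead takes a uniform random $\tilde{n}$-subset $H\subseteq V\setminus F$, bounds $\Pr[e\cap H=\emptyset]\leq(1-\tfrac{1}{10n})^{\tilde{n}}\ll 1/n^{2}$ for each $e\in E\setminus M$ directly, and applies a union bound; your second sketched alternative (independent sampling at rate $\tfrac{30\log n}{n}$ plus Markov on $|H|$) is essentially the same idea in i.i.d.\ disguise. Your main proof, by contrast, is the classical greedy set-cover/hitting-set argument: pick the allowed vertex covering the most unhit edges, use the averaging bound $\max_{v\in V\setminus F}\deg_U(v)\geq \tfrac{|U|}{10n}$ to get geometric decay $|U|\leq n^{2}e^{-t/(10n)}$, and stop after $\lceil 20n\ln n\rceil\leq\tilde{n}$ steps. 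Both routes are one-paragraph elementary arguments with the same key input (every $e\in E\setminus M$ has $|e\setminus F|>n/10$, and $|E\setminus M|,|V\setminus F|\leq n^{2}$); the greedy route has the modest advantage of being deterministic/constructive, so it keeps the whole EAH verification free of probabilistic existence claims, whereas the paper's random-subset calculation is marginally shorter to state. Your bookkeeping is fine: the chosen vertex is never already in $H$, the padding to exact size $\tilde{n}$ is harmless since $|V\setminus F|\geq\tfrac{99}{100}n^{2}\geq\tilde{n}$ for large $n$, and $20n\ln n\leq 100n\log n$ comfortably.
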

\begin{proof}
We claim that a uniform random $\tilde{n}$-subset $H\subseteq V\setminus F$ satisfies the claim with high probability. Consider a fixed $e\in E\setminus M$ so that $|e\setminus F| > n/10$. A birthday-paradox-like calculation gives
\[\textstyle
\Pr_H[e\cap H = \emptyset]
~\leq~ (1-\frac{1}{10n})^{\tilde{n}}
~=~ [(1-\frac{1}{10n})^{10n}]^{10\log n}
~=~ [1/e - o(1)]^{10\log n}
~\ll~ 1/n^2.
\]
A union bound over all $e\in E\setminus M$ shows that $H$ hits every edge in $E\setminus M$ with high probability.
\end{proof}

\section{Power-1.5 solution to \autoref{p2}}  \label{sec:hex}

In this section, we describe an alternative solution to \autoref{p2} with exponent $\alpha=1.5$ (which also beats the previous best construction with $\alpha\approx 1.22$~\cite{BenDavid2017}). Our alternative solution is more structured than the quadratic one, which allows us to use it to prove \autoref{cor:adeg} in \autoref{sec:apps}. Our construction is inspired by the board game Hex.

We define a partial boolean function $\Hex_n\colon\{0,1\}^{n\times n}\to\{0,1,*\}$ whose $n^2$-bit inputs are interpreted as $n\times n$ boolean matrices. We say that two matrix entries in~$[n]\times[n]$ are \emph{connected} if they are adjacent either horizontally or vertically (but not diagonally). A~\emph{$1$-path} in an input $x$ is top-to-bottom path of $1$-entries, that is, the path starts on a $1$-entry in the topmost row, moves along connected $1$-entries, and ends on the bottommost row. Similarly, a \emph{$0$-path} in $x$ is a left-to-right path of $0$-entries. Note that no $x$ can contain both a $1$-path and a $0$-path. We define
\[
\Hex_n(x) ~\coloneqq~
\begin{cases}
	~1 & \text{if $x$ contains a $1$-path of length at most $2n$},\\
	~0 & \text{if $x$ contains a $0$-path of length at most $2n$},\\
	~* & \text{otherwise}.
\end{cases}
\]
Clearly $\C(\Hex_n) = 2n$. It remains to prove the following lemma. For simplicity, we drop $\Hex_n$ from notation and write $\C_\Sigma(x)\coloneqq \C_\Sigma(\Hex_n,x)$.
\begin{lemma} \label{lem:hard}
There is an $x\in\Hex_n^{-1}(*)$ such that both $\C_{\ol{0}}(x)$ and $\C_{\ol{1}}(x)$ are $\Omega(n^{1.5})$.
\end{lemma}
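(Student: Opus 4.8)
\emph{Step 1: reformulation, and a symmetry.} The plan is first to rephrase the two quantities combinatorially. Reading a $0$-entry can never help a $\ol{0}$-certificate $\rho$ of $x$: such a $\rho$ must forbid a $0$-path of length $\le 2n$ in \emph{every} completion, and completions only acquire more $0$'s, so $\rho$ works if and only if the $1$-entries it reads already meet every left-to-right grid path of length $\le 2n$. Hence $\C_{\ol{0}}(x)$ equals the least number of $1$-cells of $x$ hitting all such paths, and dually $\C_{\ol{1}}(x)$ equals the least number of $0$-cells hitting all top-to-bottom paths of length $\le 2n$. I will moreover build $x$ to be (essentially) invariant under transposing the matrix and complementing the bits; this interchanges $\C_{\ol{0}}$ with $\C_{\ol{1}}$ and interchanges the two conditions defining $\Hex_n^{-1}(*)$, so it suffices to produce one such symmetric $x$ whose set $A$ of $1$-cells (i) contains no top-to-bottom path of length $\le 2n$, and (ii) requires $\tOmega(n^{1.5})$ of its cells to meet every left-to-right path of length $\le 2n$. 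Note that (i) already forces $A$ to meet every such left-to-right path (otherwise the $0$-cells would contain one, contradicting symmetry), so (ii) is meaningful.

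\emph{Step 2: the construction.} I will take the $1$-cells to form a maze-like ``wall'' threading the grid from top to bottom but winding back and forth at scale $\sqrt{n}$, so that it has length $\tOmega(n^{1.5})$ and in each band of $\sqrt{n}$ columns it runs as $\Theta(\sqrt{n})$ roughly-parallel strands; the rest of the picture (filling the two sides of the wall) is then dictated by the transpose-and-complement symmetry, so the $0$-cells form the mirror image. Pinning down the exact recipe is, I expect, the crux. The wall must wind so much that no all-$1$ route from the top row to the bottom row has length $\le 2n$ --- the threshold $2n$ is exactly the cost of crossing the grid plus punching straight through a width-$\sqrt{n}$ barrier --- and yet it must be ``braided'' rather than solid, so that a short left-to-right path pierces it using only a few cells while there are $\tOmega(n^{1.5})$ essentially different short piercings; at the same time the $0$-side must obey the mirrored requirements.

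\emph{Step 3: correctness and the lower bound.} That $x \in \Hex_n^{-1}(*)$ will then be immediate: the wall is a top-to-bottom path of $1$'s, so by the basic fact that no input contains both a $1$-path and a $0$-path, $x$ has no $0$-path whatsoever, hence none of length $\le 2n$; and $x$ has no $1$-path of length $\le 2n$ by~(i). For~(ii) I use only the trivial packing bound: if there are $k$ short left-to-right paths pairwise disjoint within $A$, then any set of $1$-cells meeting all of them has size $\ge k$, since the hitting cell on each path is a distinct cell of $A$ (tightness, not needed, would come from Menger's theorem in its ``delete only $A$-vertices'' form). The construction of Step 2 is designed to exhibit $\tOmega(n^{1.5})$ such paths --- each threading its own small family of wall-cells while overlapping the others freely in the roomy $0$-region --- which gives $\C_{\ol{0}}(x) \ge \tOmega(n^{1.5})$, and $\C_{\ol{1}}(x) \ge \tOmega(n^{1.5})$ then follows by the symmetry. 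The genuinely hard part is thus the combined demand of Step 2: a wall cheap to pierce once yet forcing $\tOmega(n^{1.5})$ pairwise-$A$-disjoint short piercings, with a complement satisfying the same conditions after a $90^\circ$ turn.
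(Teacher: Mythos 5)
Your Step 1 reformulation is correct and clean: since completions can only add $0$'s, a $\ol{0}$-certificate may as well read only $1$-cells, so $\C_{\ol{0}}(x)$ is exactly the least number of $1$-cells of $x$ needed to hit every left-to-right grid path of length $\le 2n$, and dually for $\C_{\ol{1}}$. The idea of making $x$ invariant under transpose-and-complement so that one bound yields the other is also a genuinely nice structural observation, and your packing bound for the resulting hitting-set lower bound is sound in principle.

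However, the proposal contains a real gap, and you name it yourself: the construction of Step~2 is never given. Everything after ``I will take the $1$-cells to form a maze-like wall\ldots'' is a wish list --- winding at scale $\sqrt{n}$, being ``braided rather than solid'', admitting $\tOmega(n^{1.5})$ pairwise-$A$-disjoint short piercings, and having a complement with the mirrored properties --- with no witness that such an object exists, let alone a proof that it satisfies (i) and (ii). In a construction-existence lemma, the construction \emph{is} the proof; the surrounding logic you supply is routine once the object is in hand. For comparison, the paper's proof explicitly builds $z$ out of $\sqrt{n}$ disjoint $1$-spirals of length $\Theta(n^{1.5})$ each (see \autoref{fig:inputs}). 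This input is \emph{not} invariant under transpose-and-complement, and the two sides are handled by two different elementary arguments: $\C_{\ol{0}}(z)\ge\Omega(n^{1.5})$ via a block-sensitivity/packing argument over $\Theta(n^{1.5})$ disjoint size-$\sqrt{n}$ blocks (one non-corner entry per spiral per block), and $\C_{\ol{1}}(z)\ge\Omega(n^{1.5})$ via an averaging argument that finds a ``neglected'' spiral with only $o(n)$ read $0$-cells next to it, followed by a greedy routing of a short $1$-path down through that spiral. So the paper does not rely on the self-dual symmetry you want, and the second bound uses a non-packing argument your framework would have to replace. To repair your proposal you would either need to exhibit a concrete self-dual input and verify (i)--(ii) for it, or abandon the symmetry and, as the paper does, prove the $\ol{1}$ side by a separate argument.
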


\begin{figure}[t]
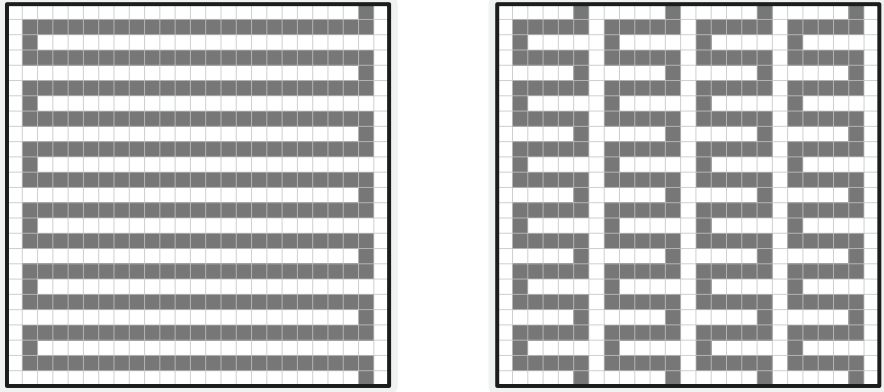

\centering
\begin{lpic}[t(-6mm)]{figs/inputs(.25)}
\lbl[c]{120,7;(a) input $y$}
\lbl[c]{390,7;(b) input $z$}
\end{lpic}
\caption{Inputs to $\Hex_n$ are $n\times n$ boolean matrices. Illustrated are two $*$-inputs: $y$ consists of a single $1$-spiral of length $\Theta(n^2)$, and  $z$ consists of $\sqrt{n}$ many $1$-spirals of length $\Theta(n^{1.5})$ each.}
\label{fig:inputs}
\vspace{-1mm}
\end{figure}

We note that it is easy to find inputs $x$ where one of $\C_{\ol{0}}(x)$ or $\C_{\ol{1}}(x)$ is large, but not both. For example, consider the input $y$ depicted in \hyperref[fig:inputs]{\autoref{fig:inputs}a} that contains a single spiralling $1$-path, call it a \emph{$1$-spiral} for short. The $1$-spiral has length $\Theta(n^2)> 2n$ and hence $\Hex_n(y)=*$.
\begin{claim}
$\C_{\ol{0}}(y)\geq\Omega(n^2)$ and $\C_{\ol{1}}(y) \leq O(n)$.
\end{claim}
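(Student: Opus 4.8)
The plan is to prove both estimates by direct inspection of the geometry of $y$, using the following reformulation. Because revealing a cell whose $y$-value is $1$ can never help certify that no short $1$-path exists (it only makes more $1$-paths available in completions), and symmetrically for $0$-cells, a $\ol{1}$-certificate for $y$ is exactly a set of $0$-cells of $y$ that meets every (orthogonally connected) top-to-bottom grid path of length at most $2n$, and a $\ol{0}$-certificate is exactly a set of $1$-cells of $y$ that meets every left-to-right grid path of length at most $2n$. (Any geometric path missing the revealed set can be realized by setting the unread cells along it appropriately.) So $\C_{\ol{1}}(y)$ and $\C_{\ol{0}}(y)$ are the minimum sizes of such transversals among the $0$-cells, respectively the $1$-cells, of $y$.

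For $\C_{\ol{1}}(y)\le O(n)$, I would exhibit $O(n)$ explicit $0$-cells of $y$ that hit every short top-to-bottom path. The intuition is that the $1$-cells of $y$ form a single spiral curve along which the top and bottom rows are $\Theta(n^2)\gg 2n$ apart, so any top-to-bottom path of length at most $2n$ must leave the spiral and pass through one of the thin $0$-channels between its turns; selecting a bounded number of $0$-cells from each of the $\Theta(n)$ turns of the spiral (exactly which ones is read off the picture in \autoref{fig:inputs}) yields an $O(n)$-size transversal. A clean way to organize this is via planar duality: it suffices to produce a curve of $0$-cells of $y$ that separates the top row from the bottom row and has length $O(n)$.

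For $\C_{\ol{0}}(y)\ge \Omega(n^2)$, I would run an adversary argument. Given any partial input $\rho$ consistent with $y$ that reads fewer than $cn^2$ cells (for a small constant $c>0$), the goal is to find a left-to-right grid path $\pi$ of length at most $2n$ avoiding every cell that $\rho$ reveals as $1$; then the completion $w$ of $\rho$ obtained by additionally setting every unread cell of $\pi$ to $0$ satisfies $\Hex_n(w)=0$ (it has a short $0$-path, hence no $1$-path at all), so $\rho$ is not a $\ol{0}$-certificate, and $\C_{\ol{0}}(y)\ge cn^2$. Such a $\pi$ exists because (i) the spiral leaves at least half of the board as $0$-cells, so the $1$-valued obstacles that $\rho$ can plant are sparse, and (ii) the $1$-cells of $y$ contain no small left-to-right barrier at all: within $y^{-1}(1)$ one cannot travel from the top row to the bottom row in $o(n^2)$ steps, which is precisely the statement that $\Hex_n(y)\neq 1$.

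The main obstacle is turning (ii) into a usable routing statement, since a set of revealed $1$-cells that blocks every short left-to-right path need not be a single connected curve; I would have to rule out small disconnected (and diagonally chained) barriers inside $y^{-1}(1)$. This, together with choosing the $O(n)$ transversal cells in the upper bound, is exactly where the concrete features of the spiral must be used: $1$-cell-wide arms, $\Theta(1)$-wide gaps between consecutive arms, and $\Theta(n)$ separation along the spiral between those same consecutive arms. Everything else is routine bookkeeping.
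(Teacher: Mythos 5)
Your overall strategy is genuinely different from the paper's, and it has both a conceptual slip and gaps you yourself acknowledge.

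For the lower bound $\C_{\ol{0}}(y)\geq\Omega(n^2)$, the paper uses a one-line \emph{sensitivity} argument: flipping any non-corner cell $e$ of the $1$-spiral to $0$ produces a short $0$-path in $y^e$, so $\Hex_n(y^e)=0$; hence any $\ol{0}$-certificate must read every one of the $\Theta(n^2)$ non-corner spiral cells. Your adversary-routing plan---for any small $\rho$, exhibit a short left-to-right path avoiding the revealed $1$-cells and then zero out the unread cells on it---is sound in principle, but your justification (ii) is off: you argue that $y^{-1}(1)$ contains ``no small left-to-right barrier'' because $\Hex_n(y)\neq 1$, but these are unrelated. In fact the spiral \emph{is} a left-to-right barrier: it is a connected top-to-bottom $1$-path and therefore separates left from right as far as $0$-paths are concerned (this is exactly why $y$ has \emph{no} left-to-right $0$-path at all). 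What actually makes your routing trivial is the sensitivity fact: the desired short path need only pass through a \emph{single} unread spiral cell, because the spiral is one cell wide and deleting any non-corner already creates a short $0$-path. You did not observe this, so your proposed route requires the delicate routing case analysis you flag as the ``main obstacle,'' whereas the paper sidesteps it entirely.

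For the upper bound $\C_{\ol{1}}(y)\leq O(n)$, the paper simply reads the five topmost rows; by the shape of the spiral this already forces any $1$-path to have length $>2n$. Your reformulation (a $\ol{1}$-certificate is WLOG a transversal of short top-to-bottom paths by $0$-cells) is correct, and a scattered $O(n)$-cell transversal likely exists, but the ``clean way'' you suggest---a curve of $0$-cells of $y$ separating the top row from the bottom row---cannot literally exist: such a curve would be an (8-connected) left-to-right $0$-path, and by the grid Hex theorem its existence would preclude the (4-connected) top-to-bottom $1$-path that $y$ does possess, namely the spiral itself. Blocking only the \emph{short} paths is a strictly weaker requirement and is achievable, but the details are left to the picture; the paper's choice of the top five rows is a concrete and simpler instantiation.
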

\begin{proof}
For the first claim, we employ a sensitivity argument. Consider any entry $e\in[n]\times[n]$ in the $1$-spiral that is not a \emph{corner} (where the spiral makes a right-angle turn). Denote by $y^e$ the input $y$ but with the entry $e$ flipped (from $1$ to $0$). Note that flipping $e$ introduces a short ($\leq 2n$) $0$-path in~$y^e$ and thus $\Hex_n(y^e)=0$. It follows that any $\ol{0}$-certificate for $y$ needs to read all the non-corner entries of which there are $\Theta(n^2)$ many. For the second claim, we note that it suffices to include the five topmost rows in a certificate to prove that any $1$-path in $y$ must be of length $>2n$.
\end{proof}

We can similarly find an input $y^*$ with large $\C_{\ol{1}}(y^*)$ and small $\C_{\ol{0}}(y^*)$. The key challenge is to find a single input where both $\ol{0}$- and $\ol{1}$-complexities are large. Our solution is to ``balance'' $y$. Namely, we let $z$ be the input that consists of $\sqrt{n}$ many disjoint $1$-spirals, each of length $\Theta(n^{1.5})$; see \hyperref[fig:inputs]{\autoref{fig:inputs}b} for an illustration. The following two claims complete the proof of \autoref{lem:hard}.

\begin{claim}
$\C_{\ol{0}}(z)\geq\Omega(n^{1.5})$.
\end{claim}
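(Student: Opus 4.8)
\textbf{Proof proposal for $\C_{\ol{0}}(z)\geq\Omega(n^{1.5})$.}
The plan is to reuse the sensitivity-style argument that handled the single spiral $y$, but now accounting for the fact that $z$ is built from $\sqrt{n}$ disjoint $1$-spirals, each of length $\Theta(n^{1.5})$. The key point is that the threshold $2n$ in the definition of $\Hex_n$ interacts with the spiral length: each individual spiral in $z$ has length $\Theta(n^{1.5}) > 2n$ (for $n$ large), so flipping a single non-corner entry of any one spiral to $0$ creates a short $0$-path of length $\le 2n$, which forces the value of $\Hex_n$ on the flipped input to be $0$.

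More concretely, first I would fix any entry $e$ lying on one of the $\sqrt{n}$ spirals of $z$ that is not a corner, and let $z^e$ be $z$ with $e$ flipped from $1$ to $0$. As in the single-spiral claim, I would argue that the flip punctures that spiral and exposes a left-to-right gap of $0$-entries whose length is $O(n)$ — indeed, since the spirals are drawn so that the ``walls'' of $0$-entries between consecutive turns have width $\Theta(\sqrt n)$ or so, the resulting $0$-path can be routed across in length $\le 2n$ — hence $\Hex_n(z^e) = 0 \ne *$. (One should double-check from the construction in \autoref{fig:inputs}b that the layout of the $\sqrt n$ spirals genuinely allows a short $0$-path through a punctured spiral; this is the one geometric detail that needs care, and it is exactly where the choice of $\Theta(n^{1.5})$ spiral length and the spacing between spirals is used.) Consequently every $\ol{0}$-certificate for $z$ must read every non-corner entry on every spiral. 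Each spiral contributes $\Theta(n^{1.5})$ entries of which only $O(n^{1.5}/\text{(turn spacing)}) = o(n^{1.5})$ are corners, so each spiral forces $\Theta(n^{1.5})$ reads, and hence $\C_{\ol{0}}(z) \geq \Omega(n^{1.5})$ — in fact already a single spiral suffices for this bound, and the $\sqrt n$ spirals only help.

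The main obstacle I anticipate is not the counting but verifying the geometric claim that flipping one interior entry of a length-$\Theta(n^{1.5})$ spiral really does create a $0$-path of length $\le 2n$ (rather than something longer), given that the spiral is now packed alongside $\sqrt n - 1$ others inside the $n\times n$ grid. This forces a concrete description of how the $\sqrt n$ spirals tile the board: presumably each spiral lives in a roughly $\sqrt n \times n$ (or $n \times \sqrt n$) sub-band, wound so that consecutive arms are $\Theta(\sqrt n)$ apart, so that a single puncture is bridged by a $0$-path that crosses the band in $O(\sqrt n)$ steps and then exits, comfortably under $2n$. Once that picture is pinned down, the sensitivity argument goes through verbatim from the $y$-case, and the final bound $\Omega(n^{1.5})$ drops out immediately. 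A symmetric argument (swapping the roles of $0$ and $1$, rows and columns) will then give the companion bound $\C_{\ol{1}}(z) \geq \Omega(n^{1.5})$, completing the proof of \autoref{lem:hard}.
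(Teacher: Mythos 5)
Your proposal hinges on the claim that flipping a \emph{single} non-corner entry of one spiral turns $z$ into a $0$-input. That claim is false, and it is exactly the obstacle that makes the multi-spiral case genuinely different from the single-spiral case $y$. Each of the $\sqrt{n}$ spirals in $z$ is a connected path of $1$-entries running from the top row to the bottom row (this is also what the $\C_{\ol{1}}$ argument in the paper relies on), so each spiral \emph{separates} the left side of the grid from the right side within its region. A left-to-right $0$-path must therefore cross all $\sqrt{n}$ spirals. Puncturing one spiral at one entry does nothing for the other $\sqrt{n}-1$ spirals, which still block every left-to-right route entirely; $\Hex_n(z^e)$ stays $*$, not $0$. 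So the ordinary sensitivity argument you used for $y$ does not transfer, and your closing remark that ``already a single spiral suffices for this bound'' is wrong for precisely this reason --- if it were right, one would just keep the input $y$ and there would be no reason to introduce $z$ at all.

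The paper's actual argument is a \emph{block} sensitivity argument. It defines, for each position $i$ along the spirals, a block $B_i$ of size $\sqrt{n}$ consisting of the $i$-th non-corner entry from \emph{every} spiral (aligned so that these $\sqrt{n}$ holes line up to admit a single short left-to-right route). Flipping the whole block $B_i$ to $0$ simultaneously punctures all $\sqrt{n}$ spirals and creates a $0$-path of length $\le 2n$, hence $\Hex_n(z^{B_i})=0$. The $\ell=\Theta(n^{1.5})$ blocks are pairwise disjoint, so any $\ol{0}$-certificate must read at least one bit per block, giving $\C_{\ol{0}}(z)\ge\ell=\Omega(n^{1.5})$. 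The count of $\Omega(n^{1.5})$ in the end comes from the \emph{number of disjoint sensitive blocks}, not from the number of non-corner entries in a single spiral as in your write-up. (As a side note, your suggestion that $\C_{\ol{1}}(z)\ge\Omega(n^{1.5})$ follows ``by symmetry'' also does not match the paper: $z$ is not symmetric under swapping $0$s and $1$s, and the paper proves that bound with an entirely different greedy path-construction argument.)
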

\begin{proof}
\definecolor{myblue}{HTML}{82b3ff}
We employ a block sensitivity argument. Let $\ell = \Theta(n^{1.5})$ denote the number of non-corner entries in each $1$-spiral of $z$. For each $i\in[\ell]$, we define a \emph{block} $B_i\subseteq [n]\times[n]$, $|B_i|=\sqrt{n}$, as the set that contains the $i$-th non-corner entry from each $1$-spiral (the non-corners of a spiral are ordered top-to-bottom, say). Denote by~$z^{B_i}$ the input obtained from $z$ by flipping all the entries in $B_i$ (from $1$ to $0$). Flipping any block $B_i$ introduces a short ($\leq 2n$) $0$-path in $z^{B_i}$ and hence $\Hex_n(z^{B_i})=0$. For example, in the following illustration, the short $0$-path (drawn in {\bf\color{myblue} blue}) is created when we flip the block consisting of the striped entries:
\begin{center}
\begin{lpic}[t(-4mm),b(-5mm)]{figs/flip(.18)}
\end{lpic}
\end{center}
It follows that any $\ol{0}$-certificate for $z$ needs to read at least one entry from each of the blocks. But since the blocks are disjoint and there are $\ell=\Theta(n^{1.5})$ many of them, the claim is proved.
\end{proof}

\begin{claim}
$\C_{\ol{1}}(z)\geq\Omega(n^{1.5})$.
\end{claim}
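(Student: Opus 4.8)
The plan is to mirror the block-sensitivity argument just used for $\C_{\ol{0}}(z)$, now certifying ``$\ne 1$'' rather than ``$\ne 0$''. A $\ol{1}$-certificate for $z$ is a partial input $\rho$ consistent with $z$ such that no input consistent with $\rho$ contains a top-to-bottom $1$-path of length $\le 2n$; equivalently, writing $Z \coloneqq \rho^{-1}(0) \subseteq z^{-1}(0)$, the set $Z$ must meet every top-to-bottom path of length $\le 2n$ (otherwise the unfixed entries along such a path could be set to $1$). So it suffices to exhibit $\ell = \Theta(n^{1.5})$ pairwise disjoint ``flip blocks'' $B_1', \ldots, B_\ell' \subseteq z^{-1}(0)$ such that turning all entries of $B_i'$ into $1$'s creates a top-to-bottom $1$-path of length $\le 2n$. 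Then any such $Z$ meets each $B_i'$ (the only $0$-entries of $z$ lying on the resulting short $1$-path are in $B_i'$), disjointness forces $|Z| \ge \ell$, and hence $\C_{\ol{1}}(z) \ge |\rho| \ge |Z| \ge \ell = \Omega(n^{1.5})$.

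Constructing these blocks is where the layout of the $\sqrt n$ spirals enters. Each spiral is long ($\Theta(n^{1.5}) > 2n$) and spiral-shaped, so it contains no short top-to-bottom sub-path; a short $1$-path must therefore be stitched together from several spiral arcs joined by a few flipped $0$-entries acting as bridges. The spirals are positioned so that a short arc, completed by $O(\sqrt n)$ bridge entries, crosses the vertical band it is responsible for, and concatenating these crossings yields a top-to-bottom $1$-path of length $\le 2n$ (the arcs together cover height $n$, while the bridges are lower order). Varying which arcs are used---there are $\Theta(n^{1.5})$ non-corner positions along the spirals---and choosing the bridge entries disjointly yields the required $\ell = \Theta(n^{1.5})$ disjoint blocks; checking that they fit disjointly among the $\Theta(n^2)$ many $0$-entries is routine once the layout is pinned down.

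The genuinely delicate point---with no counterpart in the $\C_{\ol{0}}$ proof---is that the spirals must be shaped so that a single small block of flips produces a short $1$-path while $z$ itself has neither a short $1$-path nor a short $0$-path; in particular no spiral may contain a near-straight segment spanning the grid, and arcs of distinct spirals must not accidentally join into a short top-to-bottom path. This is exactly what the ``balancing'' of the big spiral $y$ into $\sqrt n$ interlocking spirals is for, and the flip blocks have to be designed in tandem with that shaping. A cleaner route, should the construction of $z$ be arranged to be invariant (up to relabeling) under transposing the matrix and negating all entries---an involution $\sigma$ satisfying $\Hex_n(\sigma(x)) = \neg \Hex_n(x)$ that carries $\ol{1}$-certificates to $\ol{0}$-certificates of the same size---is simply to invoke the preceding claim: $\C_{\ol{1}}(z) = \C_{\ol{0}}(\sigma(z)) = \C_{\ol{0}}(z) \ge \Omega(n^{1.5})$. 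Either way the crux is the geometric fact that small localized sets of flips create short $1$-paths while leaving $z$ a $*$-input.
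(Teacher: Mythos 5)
The framework you set up is sound: a $\ol{1}$-certificate $\rho$ must have $\rho^{-1}(0)$ hit the $z$-zero-set of every short top-to-bottom path, so exhibiting $\ell$ pairwise disjoint ``flip blocks'' would indeed give $\C_{\ol{1}}(z)\ge\ell$. But this is not how the paper argues, and I do not believe the mirroring you propose can be pushed to $\ell=\Theta(n^{1.5})$ at all.

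The crucial difficulty is an asymmetry you gloss over. In the $\C_{\ol{0}}$ proof the blocks $B_i$ have size exactly $\sqrt{n}$: the zero-entries of $z$ form something like one long almost-complete left-to-right $0$-path, blocked only where it crosses the $\sqrt{n}$ spirals, so one flipped $1$-entry per spiral completes a short $0$-path. The dual picture fails. The $1$-entries of $z$ are $\sqrt{n}$ disjoint, \emph{winding} spirals, each of length $\Theta(n^{1.5})$; no short top-to-bottom $1$-path can hug a spiral, because following a spiral takes $\Theta(n^{1.5})$ steps and each shortcut past one spiral arm saves only $O(\sqrt n)$ steps. To cut a $\Theta(n^{1.5})$-long snake down to length $\le 2n$ you must shortcut $\Theta(n)$ times, and each shortcut burns at least one fresh $0$-entry. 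Hence the $z$-zero-set of any short $1$-path has size $\Theta(n)$, not the $O(\sqrt n)$ you posit. Since $|z^{-1}(0)|=\Theta(n^2)$, one can pack at most $O(n)$ disjoint blocks of size $\Theta(n)$, so the block-sensitivity argument caps out at $\Omega(n)$ — strictly weaker than the claimed $\Omega(n^{1.5})$. Your ``$O(\sqrt n)$ bridge entries'' estimate is precisely the number that would be needed for the bookkeeping to close, and it is not justified; I think it is simply wrong for this geometry. The symmetry escape hatch also does not apply: $z$ is built from $1$-spirals and the paper proves its two claims by visibly different arguments, which already tells you $z$ is not fixed (up to relabeling) by the transpose-and-negate involution — if it were, the authors would have invoked that symmetry instead of writing a second proof.

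What the paper actually does is an \emph{adaptive} adversary argument that sidesteps block sensitivity entirely. Given any partial $\rho$ of size $o(n^{1.5})$, averaging over the $\sqrt{n}$ spirals finds one ``neglected'' spiral near which $\rho$ has only $o(n)$ zero-reads; one then greedily threads a path straight down through that spiral's strip, freely passing through the $\Theta(n)$ unread $0$-entries of $z$ it encounters (those are fine — only \emph{read} zeros are obstacles), and detouring around each of the $o(n)$ read zeros at constant extra cost, landing at length $n+o(n)\le 2n$. The path depends on $\rho$, which is exactly why this beats the block-sensitivity bound: you do not need a globally disjoint family of paths, just one escape route per adversary $\rho$. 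That adaptivity is the key idea missing from your proposal.
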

\begin{proof}
Let $\rho$ be a partial input consistent with $z$ that has size $o(n^{1.5})$. We show that $\rho$ cannot be a $\ol{1}$-certificate. By averaging, there is some ``neglected'' $1$-spiral such that $\rho$ reads $o(n)$ many $0$-entries adjacent to the spiral. We will greedily construct a $1$-path  consistent with $\rho$ by starting at the top of the neglected spiral and trying to fit a $1$-path straight down the matrix. The $0$-entries read by $\rho$ can prevent a direct downward path from working, but every time we encounter such a $0$-entry we can avoid it by taking one step to the left or right (following the direction of the spiral). These left/right steps make us waste at most $o(n)$ extra steps in addition to the $n$ downward steps. This shows there exists a $1$-path of length $n+o(n)\leq 2n$ consistent with $\rho$, and hence $\rho$ is not a $\ol{1}$-certificate.
\end{proof}

\begin{remark}
It is easy to see that $\C_{\ol{0}}(z)$ and $\C_{\ol{1}}(z)$ are also $O(n^{1.5})$. We suspect that $z$ is in fact extremal for $\Hex_n$ meaning that no other $*$-input can witness an exponent larger than $\alpha=1.5$. However, we have not been able to prove this. Any improvement over $\alpha=1.5$ would translate into a better separation in \autoref{cor:adeg}.
\end{remark}

\section{Equivalences of puzzles} \label{sec:equivalences}

We now prove our three puzzles equivalent (\autoref{thm:equivalences}). The proof comprises of four implications,  each proved in its own subsection: \ref{p2} $\Rightarrow$ \ref{p1} $\Rightarrow$ \ref{p2} $\Rightarrow$ \ref{p3} $\Rightarrow$ \ref{p2}. This is more implications than strictly necessary, but not all directions are equally good in terms of overheads caused by log factors.


\subsection{Construction \ref{p2} \texorpdfstring{$\Rightarrow$}{=>} \ref{p1}} \label{thm:twotoone}
\vspace{10pt}
\begin{mdframed}
\begin{tabular}{rl}
\bf Given: & A partial function $f\colon \{0,1\}^n \to \{0,1,*\}$ and an input $x\in f^{-1}(*)$.\\[6pt]
\bf Construct: & A total function $g\colon\{0,1\}^{3n^2\log^2 n}\to\{0,1\}$ such that \\
& $\C_0(g) \ge \min\{\C_{\ol{0}}(f,x),\C_{\ol{1}}(f,x)\}$ and $\UC_1(g) \le 3\C(f) \log^2 n$.
\end{tabular}
\end{mdframed}

\paragraph{Overview.}
The basic idea is that we would like to turn regular,
ambiguous certificates for $f$ into unambiguous collections of certificates
for a modified function $g$. One way to do so is to give each
certificate for $f$ a unique identification number; then we can
require the new inputs to $g$ to consist of both an input $z$ to $f$
and an identification number (written in binary) for a certificate
in $z$. We will let such a new input $(z,k)$ evaluate to $1$
if the certificate specified by the number $k$ really is in $z$,
and we will define $(z,k)$ to be a $0$-input otherwise.
Then by reading all of $k$ and the corresponding
certificate in $z$, we get an unambiguous certificate for $(z,k)$
whenever $(z,k)$ is a $1$-input.

This strategy makes $1$-certificates unambiguous, but it
does not necessarily ensure that the function is hard to certify
on $0$-inputs. The reason is that for the new function,
it is conceivable that we could certify $(z,k)$ is a $0$-inputs
just by reading a few bits of $k$ and a few bits of $z$, but
that those few bits suffice to prove that the certificate specified
by $k$ cannot possibly be found in $z$. Indeed, it might even
be easy to certify that $(z,k)$ is a $0$-input when $z=x$,
the hard $*$-input to $f$.

We wish to eliminate this $0$-certification strategy so that
the new function is hard to certify on at least one $0$-input. To do so,
we will use the \emph{cheat sheet} framework~\cite{Aaronson2016}.
The idea is to hide the identification number $k$ of the
certificate in one cell of an array consisting of, say, $n$ different 
cells. We choose $n$ cells because this is large enough so that even 
reading a single bit from each cell is too expensive. 
But now that we have hidden $k$ in one of $n$ cells, there needs to
be a way to find it.
So to specify which cell of the array is the ``correct'' one, 
the one where we've stored $k$,  
we will change the problem to have $\log n$ different instances of $f$,
and we will interpret the $f$-outputs of these instances as specifying
a binary string of length $\log n$, which can index a specific cell of our array. 
Now that there are $\log n$ copies of $f$, the correct array cell will 
be required to contain identification numbers for $\log n$ different certificates,
one for each instance of $f$. 
Now we define this new function $g$ to evaluate to $1$ if 
all the $\log n$ $f$-inputs are $0$- or $1$-inputs
and if the array cell indexed really contains valid identification
numbers of certificates present in the $f$-inputs. 
Otherwise, if this doesn't hold, we define the input to be a $0$-input to $g$. 

With this construction, the contents of the cell pointed to by the $\log n$-bit 
string of outputs of $f$, along with the certificates in that 
cell form small unambiguous certificates for $1$-inputs to $g$.
On the other hand, the $g$-input consisting of $\log n$
copies of $x \in f^{-1}(*)$ together with any array content will be 
a $0$-input that is hard to certify: 
Any certificate must either prove that
at least one copy of $x$ is not a $0$-input or not a $1$-input,
which is expensive to do because we assumed that 
$\C_{\ol{0}}(f,x)$ and $\C_{\ol{1}}(f,x)$ are large, 
or else it must prove that none of the cells in the array contain valid certificates, which requires it to read at least one bit from each of the $n$ cells. We now prove this more formally.

\paragraph{Formal proof.}
A certificate of size $\C(f)$ specifies the indices of up to $\C(f)$ input bits and an assignment to those bits. Since an index can be encoded using $\log n$ bits, the total number of bits needed to represent a certificate is at most $\ell \coloneqq 2\C(f)\log n \leq 2n \log n$. We choose~$k \coloneqq \log n$ as the number of copies of $f$ that we will use.
Let us define $g\colon\{0,1\}^{kn+2^k k\ell}\to\{0,1\}$ on 
$kn+2^k k \ell \leq 3n^2\log^2 n$ bits.
For an input $z$ to $g$, we define $s_z$ to be the string in
$\{0,1,*\}^k$ that we get by applying $f$ to the first $kn$
bits of $z$, interpreted as $k$ inputs to $f$.
If $s_z\notin\{0,1\}^k$, we define $g(z)\coloneqq 0$.
Otherwise, if $s_z\in \{0,1\}^k$, we interpret the last~$2^k k \ell$ bits of $z$ as an array of $2^k$ cells of size $k\ell$
each, and we let $C_z\in \{0,1\}^{k\ell}$ be the contents
of the cell indexed by $s_z$. We interpret~$C_z$ as specifying
$k$ different certificates for $f$, each specified using $\ell$ bits.
We then set $g(z)\coloneqq 1$ if each of the $k$ inputs for $g$
in the first part of the string~$z$ contains the corresponding certificates 
specified by $C_z$ in order. Otherwise, we set $g(z)\coloneqq 0$.

The following two claims verify that this construction has the desired properties.

\begin{claim}
$\C_0(g)\ge \min\{\C_{\ol{0}}(f,x),\C_{\ol{1}}(f,x)\}$.
\end{claim}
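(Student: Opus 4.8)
The goal is to show that the $0$-certificate complexity of $g$ is at least $m \coloneqq \min\{\C_{\ol 0}(f,x), \C_{\ol 1}(f,x)\}$. The plan is to exhibit one particular $0$-input $z^\star$ of $g$ and argue that every $0$-certificate for $z^\star$ must have size at least $m$. I would take $z^\star$ to be the input consisting of $k = \log n$ copies of the hard $*$-input $x$ in the first part of the string, together with an arbitrary (say all-zero) array in the cheat-sheet part. Since $f(x) = *$, we have $s_{z^\star} \notin \{0,1\}^k$, so $g(z^\star) = 0$ by definition; thus $z^\star$ is indeed a $0$-input and it makes sense to talk about its $0$-certificates.

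**Main argument.** Let $\rho$ be any $0$-certificate for $z^\star$, i.e. a partial assignment consistent with $z^\star$ such that every completion $w$ of $\rho$ satisfies $g(w) = 0$; I want to show $|\rho| \ge m$. The key case analysis is on what $\rho$ "knows" about the cheat-sheet array. First suppose that for \emph{some} cell index $i \in \{0,1\}^k$, the partial assignment $\rho$ does not read all of cell $i$ — in fact, the cleaner statement is: suppose $\rho$ leaves at least one bit of \emph{every} cell unread is the bad case we must rule out, but let me instead argue directly. I claim that if $|\rho| < m$, then $\rho$ reads strictly fewer than $m \le \C(f)\log n \le \ell$ bits total, so in particular for the cell indexed by any fixed target string $t \in \{0,1\}^k$, $\rho$ reads fewer than $\ell$ bits of the $k$-block array-cell contents... — this is getting complicated, so the honest structure is: I would show that if $|\rho| < m$ then I can construct a completion $w$ of $\rho$ with $g(w) = 1$, contradicting that $\rho$ is a $0$-certificate.

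**Construction of the bad completion.** Here is the heart of it. Assume $|\rho| < m \le \min\{\C_{\ol 0}(f,x), \C_{\ol 1}(f,x)\}$. For each of the $k$ copies of $f$, $\rho$ restricted to that copy has size $< \C_{\ol 0}(f,x)$, hence is \emph{not} a $\ol 0$-certificate for $x$ in that copy; likewise it is not a $\ol 1$-certificate. Being consistent with $x$ but not a $\ol 1$-certificate means there is a completion of the $i$-th copy on which $f$ evaluates to $0$; similarly there is a completion on which $f$ evaluates to $1$. So for \emph{each} bit $b_i \in \{0,1\}$ of our choosing, and each copy $i$, we can extend $\rho$ within copy $i$ to an honest input $y_i$ of $f$ with $f(y_i) = b_i$ and, moreover (since $\rho\!\restriction\!i$ is a genuine $0$- or $1$-input partial assignment completion), we may further insist $y_i$ contains a valid $\C(f)$-sized certificate witnessing $f(y_i) = b_i$. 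Choose the target string $t = b_1 b_2 \cdots b_k \in \{0,1\}^k$; this picks out cell $t$ of the array. Since $|\rho| < m \le \ell$ and the cell has $k\ell$ bits, $\rho$ leaves many array bits free — in particular we are free to fill cell $t$ with the encodings of the $k$ certificates found above (and fill everything else arbitrarily, consistently with $\rho$). Call the resulting total input $w$. Then $s_w = t$, the indexed cell $C_w$ holds exactly those certificates, and each copy of $f$ in $w$ contains its prescribed certificate, so $g(w) = 1$. This contradicts that $\rho$ is a $0$-certificate, and we conclude $\C_0(g) \ge m$.

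**Main obstacle.** The delicate point — and the one I would spend the most care on — is the interaction between the budget on $\rho$ and the encoding length $\ell$ of certificates: I need $|\rho| < m$ to simultaneously guarantee (a) $\rho$ fails to be a $\ol 0$- or $\ol 1$-certificate on each of the $k$ copies, and (b) $\rho$ leaves enough room in the chosen array cell to write the $k$ witnessing certificates. Point (a) is automatic from $m \le \min\{\C_{\ol 0}(f,x),\C_{\ol 1}(f,x)\}$ applied per copy (the restriction of a small $\rho$ to any single copy is even smaller). Point (b) needs $m$ to be at most the number of free bits in a cell; since a cell has $k\ell$ bits and $m \le \C(f)\log n \le \ell \le k\ell$ this is comfortable, but one must make sure that "consistency with $z^\star$" does not already pin down cell $t$ in a way that blocks writing the certificates — it does not, because $z^\star$ only fixed copies of $x$ and left the array to be chosen, so in the actual proof $z^\star$'s array bits read by $\rho$ are simply whatever $\rho$ happens to have queried, numbering fewer than $m$, and the rest of cell $t$ is free. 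Making this bookkeeping airtight is the only real work.
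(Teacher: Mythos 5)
Your proof has the same overall skeleton as the paper's (same hard input $z^\star$, same contradiction strategy, same observation that $\rho$ restricted to each copy fails to be a $\ol 0$- or $\ol 1$-certificate), but the way you handle the array cell has a genuine gap.

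You fix the target string $t = b_1\cdots b_k$ \emph{first} (by choosing which value to drive each copy of $f$ to) and \emph{then} try to write the witnessing certificates into cell $t$. The problem is that $\rho$ may already have read some bits of cell $t$, and those bits are pinned to $0$ (since $z^\star$'s array is all-zero). Your observation that ``$\rho$ leaves many array bits free'' in cell $t$ does not help: you need to write a \emph{specific} $k\ell$-bit encoding into that cell, and there is no reason the fixed zeros are compatible with it. Relatedly, the inequality you lean on, $m \le \ell$, is false in general: $\ell = 2\C(f)\log n$, while $m = \min\{\C_{\ol 0}(f,x),\C_{\ol 1}(f,x)\}$ can be as large as $n$, and the whole point of the construction is the regime where $m \gg \C(f)$, so typically $m \gg \ell$. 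The correct (and in fact simpler) fix is the one the paper uses: since $|\rho| < m \le n = 2^k$ and there are $2^k$ cells, a pigeonhole argument gives a cell \emph{entirely unread} by $\rho$; take $t$ to be \emph{its} index, then complete copy $i$ to a $t_i$-input (possible exactly because $\rho\restriction i$ certifies neither ``$\ne 0$'' nor ``$\ne 1$''), and freely write the corresponding certificates into the untouched cell. Choosing the cell before choosing the bits, rather than the reverse, is what makes the consistency bookkeeping go through.
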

\begin{proof}
Consider the input to $g$ consisting of $k$
copies of $x$, followed by an all-$0$ array. Consider any
certificate $c$ for this input. If $c$ reads fewer than
$\min\{\C_{\ol{0}}(f,x),\C_{\ol{1}}(f,x)\}$ bits, then $c$
does not certify that $x$ is not a $0$-input or that $x$ is not
a $1$-input for any of the $k$ copies of $x$. Moreover,
$c$ also cannot read a bit of each of the $2^k = n$ array cells,
since $n$ is larger than $\min\{\C_{\ol{0}}(f,x),\C_{\ol{1}}(f,x)\}$.
Hence there is some array cell, indexed by some string $s\in\{0,1\}^k$,
such that $c$ reads no bits of that array cell. Since $c$
fails to prove anything about the $f$-outputs of the copies of $x$,
we can find an input $y$ to $g$ which is consistent with $c$
such that the $f$-inputs in $y$ evaluate to $s$; moreover,
we can then set the array cell indexed by $s$ to provide
valid certificates for the $k$ inputs to $f$ in $y$. This causes
$y$ to be a $1$-input consistent with $c$, contradicting
the assumption that $c$ was a $0$-certificate.
\end{proof}

\begin{claim}
$\UC_1(g) \leq 3 \C(f) \log^2 n$.
\end{claim}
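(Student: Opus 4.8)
The plan is to exhibit the unambiguous DNF explicitly. I would put one conjunction $T_z$ into the DNF for every $1$-input $z$ of $g$; the key point, verified at the end, is that $T_z$ depends only on the pair $(s_z, C_z)$ consisting of the cell index and that cell's contents. Recall that $g(z)=1$ means: $s_z\in\{0,1\}^k$; the cell $C_z$ indexed by $s_z$ decodes to $k$ (genuine) certificates $\rho_1,\dots,\rho_k$ of $f$, each of size at most $\C(f)$; and each $\rho_i$ is consistent with the $i$-th copy $z_i$ living in the first $kn$ bits of $z$ (which in particular forces $\rho_i$ to be a certificate for the value $s_{z,i}=f(z_i)\in\{0,1\}$). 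Let $T_z$ be the conjunction of the following literals: (i) the $k\ell$ literals that read the entire cell indexed by $s_z$ and pin it to $C_z$; and (ii) for each $i\in[k]$, the $|\rho_i|\le\C(f)$ literals that read the positions of $z_i$ in $\dom(\rho_i)$ and pin them to the values dictated by $\rho_i$. Since the literals of (i) and (ii) lie in disjoint blocks of the input and are internally consistent, $T_z$ is a legitimate conjunction of width at most $k\ell+k\,\C(f)=2\C(f)\log^2 n+\C(f)\log n\le 3\C(f)\log^2 n$.

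Next I would check that $D\coloneqq\bigvee_z T_z$ computes $g$. One inclusion is immediate: $T_z$ is satisfied by $z$, so $D(z)=1$ whenever $g(z)=1$. For the converse, suppose $T_z$ is satisfied by some input $w$. Since $z$ is a $1$-input, each $\rho_i$ really is a certificate of $f$ with value $b_i\in\{0,1\}$, and $b_i=f(z_i)=s_{z,i}$; because $T_z$ also forces $\rho_i$ to be consistent with $w_i$, it follows that $f(w_i)=b_i=s_{z,i}$. Hence $s_w\in\{0,1\}^k$ and $s_w=s_z$, so the cell that $g$ consults on input $w$ is precisely the one $T_z$ has pinned to $C_z$; its contents $C_z$ decode to $\rho_1,\dots,\rho_k$, each consistent with the corresponding $w_i$, and therefore $g(w)=1$.

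Finally, unambiguity. Suppose $T_z$ and $T_{z'}$ are both satisfied by some $w$. By the argument above, $s_z=s_w=s_{z'}$, and $T_z$ and $T_{z'}$ each pin the same cell of $w$ (the one indexed by $s_w$) --- the former to $C_z$, the latter to $C_{z'}$ --- so $C_z=C_{z'}$. But the literals of $T_z$ are a deterministic function of $(s_z,C_z)$: part (i) is just $(s_z,C_z)$ written out, and the $\rho_i$ in part (ii) are obtained from $C_z$ by the fixed decoding rule. Hence $T_z=T_{z'}$, so at most one conjunction of $D$ fires on any input, and $D$ is an unambiguous DNF of width at most $3\C(f)\log^2 n$.

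The only genuinely delicate step is unambiguity, and the cheat-sheet structure is exactly what makes it go through: the certificates stored inside a cell pull double duty --- the $f$-values they witness determine which cell $g$ reads off (forcing $s_w=s_z$), while the cell's raw contents determine the certificates themselves --- so a firing term is forced to be the one built from the input's true pair $(s_z,C_z)$. A minor point worth getting right in the write-up is that each group of literals sits in the correct block of the input, and that reading the \emph{entire} cell in part (i) (not merely the bits that influence the decoding) is what pins $C_z$ down and thereby seals unambiguity, at a cost of only $k\ell = 2\C(f)\log^2 n$ literals.
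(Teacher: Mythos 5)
Your proof is correct and takes essentially the same approach as the paper: both build the unambiguous DNF from terms that read the full pointed-to array cell together with the $k$ certificates it specifies, bound the width by $k\ell + k\C(f) \leq 3\C(f)\log^2 n$, and argue unambiguity via the observation that the certificates force $s_w$, which forces the cell, whose contents in turn determine the whole term. Your reformulation via the map $z\mapsto T_z$ depending only on $(s_z, C_z)$ is just a slightly more explicit rendering of the paper's argument.
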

\begin{proof}
Intuitively, the contents of the cell 
referred to by the string $s_z$ and all the certificates in it 
together form an unambiguous certificate for $f$.
So an unambiguous $1$-certificate for $g$ has the following form: 
first, it reads exactly one certificate for each of the $k$ inputs to $f$;
second, in the array cell indexed by $s_z$, the $\log n$-bit string 
of $f$-outputs, the certificate reads the entire array cell, 
and the cell has the property that it contains exactly the same 
certificates read in the $k$ inputs to $f$ (in order). 
The size of this certificate is $k(\ell+\C(f)) \leq \log n(2 \C(f)\log n + \C(f))\leq 3 \C(f) \log^2 n$ where $k\C(f)$ bits are used to specify the certificates for $k$ copies of $f$ and $k\ell$ bits are used to read the full contents of one cell of the array.

It remains to show that the above collection of 1-certificates is unambiguous.
We claim that no input $z$ to $g$ can have two different certificates of the type we have just described.
To see this, suppose otherwise, and let $\rho_1$ and $\rho_2$
be two such certificates consistent with $z$.
Suppose that $\rho_1$ reads bits in the array cell $C_1$
and that $\rho_2$ reads bits in the array cell $C_2$.
Then since $\rho_1$ proves that the $f$-inputs in $z$
evaluate to the index of $C_1$, and since $\rho_2$ proves
that the $f$-inputs in $z$ evaluate to the index of $C_2$,
we must have $C_1=C_2$. Since $\rho_1$ reads all of $C_1$
and $\rho_2$ reads all of $C_2$, we know that $\rho_1$ and $\rho_2$
are identical on the array part of the input.
However, this array cell then specifies exactly
which bits a certificate in this collection must read from the $k$
inputs to $f$; it follows that $\rho_1$ and $\rho_2$ must be identical.
\end{proof}

\subsection{Construction \ref{p1} \texorpdfstring{$\Rightarrow$}{=>} \ref{p2}}
\vspace{8pt}
\begin{mdframed}
\begin{tabular}{rl}
\bf Given: & A total function $f\colon \{0,1\}^n \to \{0,1\}$.\\[6pt]
\bf Construct: & A partial function $g\colon\{0,1\}^{2n}\to\{0,1,*\}$ and an input
$z\in g^{-1}(*)$\\
& such that $\min\{\C_{\ol{0}}(g,z),\C_{\ol{1}}(g,z)\}\ge \C_0(f)$ and $\C(g)\le 2\UC_1(f)$.
\end{tabular}
\end{mdframed}

\noindent
Let $U\subseteq\{0,1,*\}^n$ be an unambiguous collection of $1$-certificates for $f$ so that
\begin{itemize}[noitemsep,label=$-$]
\item for every $x\in f^{-1}(1)$ there is a unique $\rho_x\in U$
such that $x$ is consistent with $\rho_x$;
\item each $\rho\in U$ has size $|\rho|\leq \UC_1(f)$.
\end{itemize}
The function $g$ will be defined on inputs $(x,y)\in\{0,1\}^{2n}$
where $x,y\in\{0,1\}^n$. If $x$ is such that $f(x)=0$, we define $g(x,y)\coloneqq *$. Otherwise if $f(x)=1$, we consider the unique $\rho_x\in U$ consistent with $x$: Denote by $r(\rho_x)\subseteq[n]$ the set
of indices $i\in[n]$ that are read by $\rho_x$. We define~$g(x,y)\coloneqq\bigoplus_{i\in r(\rho_x)}y_i$,
that is, the parity of the bits of $y$
that are indexed by $r(\rho_x)$.

To certify that $g(x,y)=b$ for $b\in\{0,1\}$, it suffices to read $\rho_x\in U$ together with the corresponding set of bits $r(\rho_x)$ in $y$. This shows that $\C(g)\le 2\UC_1(f)$. We then define the hard $*$-input by~$z\coloneqq (x,0^n)$ where $x\in f^{-1}(0)$ is any input such that $\C_0(f,x)=\C_0(f)$.
\begin{claim}
$\min\{\C_{\ol{0}}(g,z),\C_{\ol{1}}(g,z)\}\ge \C_0(f)$.
\end{claim}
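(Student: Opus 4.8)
The plan is to show that any $\ol{0}$- or $\ol{1}$-certificate for $z = (x, 0^n)$ must, in effect, recover a $0$-certificate for $x$ in $f$, hence has size at least $\C_0(f,x) = \C_0(f)$. The key observation is that $g(x,y) = *$ precisely when $f(x) = 0$, so a partial input $\rho$ consistent with $z$ fails to be a $\ol{0}$-certificate iff it is consistent with some $(x', y')$ having $g(x', y') = 1$, i.e.\ $f(x') = 1$; likewise it fails to be a $\ol{1}$-certificate iff it is consistent with some $(x', y')$ having $g(x', y') = 0$, which again forces $f(x') = 1$ (since $f(x') = 0$ would give output $*$). So in \emph{both} cases, to be a valid certificate $\rho$ must rule out the existence of an $x'$ consistent with the $x$-part of $\rho$ such that $f(x') = 1$ --- that is, the restriction of $\rho$ to the first $n$ coordinates must itself be a $0$-certificate for $x$ in $f$.

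Concretely, suppose $\rho$ is consistent with $z$ and reads fewer than $\C_0(f)$ bits. Let $\rho^{(1)} \in \{0,1,*\}^n$ be the restriction of $\rho$ to the $x$-block. Then $|\rho^{(1)}| < \C_0(f) = \C_0(f,x)$, so $\rho^{(1)}$ is not a $0$-certificate for $x$: there exists $x'$ consistent with $\rho^{(1)}$ and with $x$ on the read bits, such that $f(x') = 1$. Now let $\rho^{(2)}$ be the restriction of $\rho$ to the $y$-block; the bits of $y$ it reads are all $0$ (since $z = (x, 0^n)$), so $\rho^{(2)}$ reads $0$ on some set $S \subseteq [n]$. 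I would then choose $y' \in \{0,1\}^n$ consistent with $\rho^{(2)}$ (i.e.\ $y'_i = 0$ for $i \in S$) so that $g(x', y') = \bigoplus_{i \in r(\rho_{x'})} y'_i$ takes whichever value in $\{0,1\}$ I need --- I can achieve this by setting the bits of $y'$ outside $S$ freely, which is possible as long as $r(\rho_{x'}) \not\subseteq S$. Since $|S| \le |\rho| < \C_0(f) \le n$ (assuming $\C_0(f) \le n$, which holds since $f$ is total on $n$ bits; more carefully one notes $r(\rho_{x'}) \ne \emptyset$ when $f$ is non-constant, and $|S|$ is strictly smaller than $n$), there is a coordinate in $r(\rho_{x'}) \setminus S$ whose value we can flip to set the parity arbitrarily. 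Thus $(x', y')$ is consistent with $\rho$ and $g(x', y')$ can be made equal to $0$ or to $1$ as desired, so $\rho$ is neither a $\ol{0}$- nor a $\ol{1}$-certificate for $z$. This proves $\min\{\C_{\ol{0}}(g,z), \C_{\ol{1}}(g,z)\} \ge \C_0(f)$.

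The main obstacle, and the one place to be careful, is the parity trick at the end: I need to ensure that for the $x'$ I found, the index set $r(\rho_{x'})$ of the unambiguous $1$-certificate $\rho_{x'}$ is not entirely contained in the small set $S$ of $y$-coordinates read by $\rho$ --- otherwise $g(x', y')$ would already be determined by $\rho$ and I could not set it freely. Since $|r(\rho_{x'})|$ could conceivably be small, the safe argument is a counting one: $|S| < \C_0(f)$, and I should argue $r(\rho_{x'}) \setminus S \ne \emptyset$ either because $|r(\rho_{x'})| \ge 1$ and the $x'$ can be chosen (or $S$ is small enough) to avoid full containment, or, more robustly, by noting that if $r(\rho_{x'}) \subseteq S$ then I can instead leave that particular $x'$ and use the freedom in choosing $x'$ among all extensions of $\rho^{(1)}$ with $f$-value $1$. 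I expect this to go through cleanly because $S$ is strictly smaller than the trivial certificate size; if needed one can also preprocess $f$ so that every $\rho_x \in U$ is non-empty (a constant function is not a valid instance anyway). With that detail handled, the rest is the straightforward case analysis on $\ol{0}$ versus $\ol{1}$ sketched above.
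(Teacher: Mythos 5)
You correctly flagged the weak point---ensuring $r(\rho_{x'})\not\subseteq S$---but neither of your proposed fixes actually closes it. The bound $|S|<n$ together with $r(\rho_{x'})\ne\emptyset$ does not yield $r(\rho_{x'})\setminus S\ne\emptyset$: a small certificate (even a single coordinate) can easily lie entirely inside $S$. Your fallback of ``use the freedom in choosing $x'$ among all extensions of $\rho^{(1)}$ with $f$-value $1$'' is also not an argument as stated, since a priori \emph{every} $x'$ consistent with $\rho^{(1)}$ with $f(x')=1$ might have its unambiguous $1$-certificate contained in $S$. So the proof as written has a genuine gap at exactly the step you identified.

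The paper closes this by strengthening what $x'$ must agree with $x$ on. It defines $B=\{i\in[n]: i\in r(\rho)\text{ or }i+n\in r(\rho)\}$, the union of the $x$-block read set with the (shifted) $y$-block read set $S$, and takes $\rho'$ to be the restriction of $x$ to $B$. Since $|B|\le|\rho|<\C_0(f)=\C_0(f,x)$, this $\rho'$ is not a $0$-certificate, so some $x'$ agreeing with $x$ on \emph{all of} $B$ has $f(x')=1$. Now $r(\rho_{x'})\not\subseteq S$ is automatic: if $r(\rho_{x'})\subseteq S\subseteq B$, then $\rho_{x'}$ would be consistent with $x$ (because $x'$ agrees with $x$ on $B$), certifying $f(x)=1$ and contradicting $f(x)=0$. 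One could rescue your route by observing that if all such $x'$ had $r(\rho_{x'})\subseteq S$ then reading $x$ on $R_1\cup S$ would already be a $0$-certificate of size $<\C_0(f,x)$, a contradiction---but that is exactly the paper's argument in disguise, so it is cleaner to fold $S$ into the agreement constraint on $x'$ from the start rather than patch it at the end.
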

\begin{proof}
Let $\rho\in\{0,1,*\}^{2n}$ be a partial input consistent with $z$ that has size $|\rho|<\C_0(f)$. Our goal is to show that $\rho$ is not a $\ol{1}$-certificate (showing that $\rho$ is not a $\ol{0}$-certificate is analogous). It is possible that~$\rho$ reads some bits in the first half of the input $z=(x,0^n)$ and some bits in the second half. We define a set~$B\coloneqq\{i\in[n]: i\in r(\rho)\text{ or }i+n\in r(\rho)\}$ that ``shifts'' all the query positions in the second half to the first half. Let $\rho'\in\{0,1\}^n$ be the partial input consistent with $x$ such that~$r(\rho')=B$. Since $|\rho'|=|B|\leq|\rho|<\C_0(f)$, we know that $\rho'$ does not certify $f(x)=0$. This means there is some $1$-certificate $\sigma\in U$ consistent with $\rho'$ and such that $r(\sigma)\not\subseteq r(\rho')$. Our goal becomes to use $\sigma$ to modify $z$ in positions outside $r(\rho)$ to obtain a $z'$ such that $g(z')=1$, which would show that $\rho$ is not a $\ol{1}$-certificate, concluding the proof. Indeed, starting with $z=(x,0^n)$ we can modify the first half $x$ to contain $\sigma$, and we can modify the bits $r(\sigma)\setminus B\neq\emptyset$ in the second half~$0^n$ so that the positions $r(\sigma)$ (in the second half) have odd parity.
\end{proof}

\subsection{Construction \ref{p2} \texorpdfstring{$\Rightarrow$}{=>} \ref{p3}}
\vspace{8pt}
\begin{mdframed}
\begin{tabular}{rl}
\bf Given: & A partial function $f\colon \{0,1\}^n \to \{0,1,*\}$ and $x\in f^{-1}(*)$.\\[6pt]
\bf Construct: & An intersecting
hypergraph $G=(V,E)$ with $|V|=2n+2$ and\\
&$r(G)=\C(f)+1$ and a colouring $c\colon V\to\{0,1\}$ such that every \\
& $c$-monochromatic hitting set has size at least $\min\{\C_{\ol{0}}(f,x),\C_{\ol{1}}(f,x)\}$.
\end{tabular}
\end{mdframed}

\noindent
For each $i\in[n]$, we introduce two vertices $v_{i,0}$ and $v_{i,1}$
into $V$. We also have two special vertices, which we denote $u_0$
and $u_1$. For each $0$-certificate $\rho\in\{0,1,*\}^n$ of size $|\rho|\leq\C(f)$,
we construct an edge $S_\rho$, as follows. For each $i\in[n]$, if $\rho_i=0$ we place $v_{i,0}$ in $S_\rho$, and if $\rho_i=1$ we place $v_{i,1}$ in $S_\rho$. We also place $u_0$ in $S_\rho$.
Then $|S_\rho|=|\rho|+1\le \C(f)+1$.

For each $1$-certificate $\rho$ of size $|\rho|\leq \C(f)$,
we construct an edge $S_\rho$ slightly differently. Essentially,
we negate the bits of $\rho$ before creating the edge out of $\rho$.
So if $\rho_i=0$ we place $v_{i,1}$ in $S_\rho$ and if $\rho_i=1$ we place
$v_{i,0}$ in $S_\rho$. We also place $u_1$ in $S_\rho$. Together,
the edges coming from $0$- and $1$-certificates constitute all the edges in $E$.
This defines $G=(V,E)$.

Note that $r(G)= \C(f)+1$. Additionally, $G$ is intersecting.
To see this, note that
if $S_\rho$ and $S_{\rho'}$ are two edges of $G$, then there are
three options: if $\rho$ and $\rho'$ are both $0$-certificates, they
share $u_0$; if $\rho$ and $\rho'$ are both $1$-certificates, they
share $u_1$; and if $\rho$ and $\rho'$ are certificates of opposite
types, then they must contradict each other at some index,
meaning that $\rho_i=0$ and $\rho'_i=1$ (or vice versa) for some $i\in[n]$.
In this last case, $S_\rho$ and $S_{\rho'}$ either both contain $v_{i,0}$
or both contain $v_{i,1}$. In all cases, $S_\rho$ and $S_{\rho'}$ have a
non-empty intersection.

We now define the colouring $c\colon V\to\{0,1\}$ based on the input $x\in f^{-1}(*)$.
We do so by setting $c(v_{i,x_i})=0$, $c(v_{i,1-x_i})=1$,
$c(u_0)=0$, and $c(u_1)=1$. It remains to prove the following claim.

\begin{claim}
If $H$ is a $c$-monochromatic hitting set for $G$, then $|H|\geq \min\{\C_{\ol{0}}(f,x),\C_{\ol{1}}(f,x)\}$.
\end{claim}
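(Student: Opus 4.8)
The plan is to split into two symmetric cases according to the monochromatic color of $H$, and in each case show that $H$ projects down to a certificate for $x$ of the matching type. First I would handle the case $c \equiv 0$ on $H$. By the definition of $c$, the $0$-colored vertices are exactly $u_0$ together with the vertices $v_{i,x_i}$ for $i \in [n]$. So $H \subseteq \{u_0\} \cup \{v_{i,x_i} : i \in [n]\}$. I would then define a partial input $\rho^* \in \{0,1,*\}^n$ by setting $\rho^*_i \coloneqq x_i$ if $v_{i,x_i} \in H$ and $\rho^*_i \coloneqq *$ otherwise; note $\rho^*$ is consistent with $x$ and $|\rho^*| \geq |H| - 1$ (the ``$-1$'' accounting for a possible $u_0$). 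The key step is to argue $\rho^*$ is a $\ol{1}$-certificate for $x$, i.e.\ that no $1$-input is consistent with $\rho^*$. Suppose some $x'$ with $f(x') = 1$ were consistent with $\rho^*$; then $x'$ has a $1$-certificate $\rho$ of size at most $\C_1(f) \leq \C(f)$, and the corresponding edge $S_\rho$ is in $E$. I claim $H$ misses $S_\rho$: the edge $S_\rho$ contains $u_1$ (which is $1$-colored, hence not in $H$) and, for each index $i$ read by $\rho$, the vertex $v_{i, 1-\rho_i} = v_{i, 1-x'_i}$. Since $x'$ agrees with $\rho^*$ wherever $\rho^*$ is defined, and $\rho^*$ is defined exactly on the indices with $v_{i,x_i} \in H$, one checks that for every $i$ read by $\rho$ the vertex $v_{i,1-x'_i}$ lies outside $H$ (either because $\rho^*_i = *$ there, or because $x'_i = x_i$ there and $v_{i,x_i} \in H$ while $v_{i, 1-x_i} \notin H$ by monochromaticity). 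Hence $S_\rho \cap H = \emptyset$, contradicting that $H$ is a hitting set. So $\rho^*$ is a $\ol{1}$-certificate and $|H| \geq |\rho^*| \geq \C_{\ol{1}}(f,x) \geq \min\{\C_{\ol{0}}(f,x), \C_{\ol{1}}(f,x)\}$, actually with a $+1$ slack to spare.

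The case $c \equiv 1$ on $H$ is entirely symmetric, swapping the roles of $0$ and $1$: now $H \subseteq \{u_1\} \cup \{v_{i,1-x_i} : i \in [n]\}$, and reading off the indices on which $v_{i,1-x_i} \in H$ gives a partial input $\rho^*$ consistent with $x$ that I claim is a $\ol{0}$-certificate; any violating $x'$ with $f(x')=0$ would have a $0$-certificate $\rho$ whose edge $S_\rho$ (containing $u_0 \notin H$ and the vertices $v_{i,\rho_i} = v_{i,x'_i}$) would be missed by $H$. This yields $|H| \geq \C_{\ol{0}}(f,x)$.

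The only subtlety — and the step I'd be most careful with — is the index-bookkeeping in the ``$H$ misses $S_\rho$'' argument: one must verify that for \emph{every} index $i$ that the violating certificate $\rho$ reads, the particular vertex of $S_\rho$ at coordinate $i$ (which is $v_{i,1-\rho_i}$ in the $c\equiv 0$ case) is genuinely absent from $H$, using both the consistency of $x'$ with $\rho^*$ and the monochromaticity of $H$ (which forbids $H$ from containing \emph{both} $v_{i,0}$ and $v_{i,1}$, and in fact forbids $H$ from containing $v_{i,1-x_i}$ at all in this case). Once that case analysis is laid out cleanly the rest is immediate, and in fact the argument shows the slightly stronger bound $|H| \geq \min\{\C_{\ol{0}}(f,x), \C_{\ol{1}}(f,x)\} + 1$, though the stated inequality suffices. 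I do not expect any genuine obstacle here; it is a direct unwinding of the construction, dual to the earlier $\ref{p1} \Rightarrow \ref{p2}$ argument.
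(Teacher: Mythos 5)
Your proof is correct and follows essentially the same route as the paper: read off from $H$ the indices of the vertices it contains, and argue that these indices form a certificate for $x$ of the appropriate type, using the hitting property together with monochromaticity to pin down which vertex of each certificate edge $H$ can touch; you phrase this as a proof by contradiction where the paper argues directly, but that is cosmetic. One minor slip: the claimed ``$+1$ slack'' does not actually follow in general---nothing forces $u_0$ (resp.\ $u_1$) into a monochromatic hitting set, and $\rho^*$ could already be a minimum-size certificate, so $|H|=\min\{\C_{\ol{0}}(f,x),\C_{\ol{1}}(f,x)\}$ is not ruled out---but this side-remark does not affect the stated bound.
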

\begin{proof}
If $H$ uses the colour $1$, then it does not contain $u_0$;
since it is a hitting set, it must intersect~$S_\rho$ for each short $0$-certificate
$\rho$ in some vertex $v_{i,b}$ (where $i\in[n]$ and $b\in\{0,1\}$).
Since $H$ is monochromatic with colour $1$, we must have
$b=1-x_i$. Since $v_{i,1-x_i}\in S_\rho$, 
we must have $\rho_i=1-x_i$. In other words, the hitting set $H$
must define a set of indices in $[n]$ such that for each short $0$-certificate
$\rho$ of $f$, there is some index $i$ in this set on which $\rho$ contradicts $x$.
Since each $0$-input to~$f$ contains a short $0$-certificate
(of length at most $\C(f)$), we conclude that this set of indices
used by $H$ is such that each $0$-input to $f$ conflicts with $x$
in one of those indices. This means that we can construct a ${\ol{0}}$-certificate
by reading these indices in the string $x$; thus $|H|\ge \C_{\ol{0}}(f,x)$.

Alternatively, suppose that $H$ uses the colour $0$. Then
it does not contain $u_1$, and must intersect each $S_\rho$
for a short $1$-certificate $\rho$ of $f$ in a vertex $v_{i,b}$.
Since $H$ uses the colour $0$, we must have $b=x_i$,
and since $v_{i,x_i}\in S_\rho$, we must have $\rho_i=1-x_i$.
As before, this implies that $H$ defines a set of indices
such that each short $1$-certificate of $f$ contradicts $x$
on one of these indices; hence we can get a ${\ol{1}}$-certificate
by reading those indices in $x$, which implies
that $|H|\ge\C_{\ol{1}}(f,x)$.
\end{proof}

\subsection{Construction \ref{p3} \texorpdfstring{$\Rightarrow$}{=>} \ref{p2}}
\vspace{8pt}
\begin{mdframed}
\begin{tabular}{rl}
\bf Given: & 
An intersecting hypergraph $G=(V,E)$ and a colouring $c\colon V\to\{0,1\}$\\
&such that every $c$-monochromatic hitting set has size at least $h>r(G)$.\\[6pt]
\bf Construct: & A partial boolean function $f\colon\{0,1\}^V\to\{0,1,*\}$ and an input $x\in f^{-1}(*)$ \\
&such that $\C(f)\le r(G)$ and $\min\{\C_{\ol{0}}(f,x),\C_{\ol{1}}(f,x)\}\ge h$.
\end{tabular}
\end{mdframed}

\noindent
We define $f$ on $n=|V|$ bits so that an input to $f$ is interpreted as a colouring of $V$. We define $f(z)\coloneqq 0$ if the colouring $z$ contains a monochromatic edge of colour $0$,
and we define $f(z)\coloneqq 1$ if $z$ contains a monochromatic
edge of colour $1$. Note that both cases cannot hold, 
because $G$ is intersecting.
If neither of these cases holds, we define $f(z)\coloneqq *$. 

To certify that $f(z)=0$ or that $f(z)=1$, we can just read
a monochromatic edge in $z$; this only uses $r(G)$ bits in the worst
case over $0$- or $1$-inputs $z$, so $\C(f)\le r(G)$.

Next, consider the input $x$ to $f$ which is defined by the colouring
$c$. Since any monochromatic edge is a monochromatic hitting set
(since $G$ is intersecting, so every edge is a hitting set), and since
the minimum monochromatic hitting set in $c$ has size $h>r(G)$,
we conclude that $c$ does not have a monochromatic edge, and hence
$f(x)=*$. Observe that a certificate that $x$ is not a $0$-input
is a proof that there is no $0$-monochromatic edge in $c$, and such
a proof must necessarily read a $1$-monochromatic hitting set in $c$;
hence $\C_{\ol{0}}(f,x)\ge h$. Similarly, we have $\C_{\ol{1}}(f,x)\ge h$.

\begin{remark}
We note that $f$ is \emph{monotone} by construction: flipping any bit in an
input $z$ from $0$ to $1$ can only change $f(z)$ from $0$ to $*$ or $1$,
or from $*$ to $1$. In particular, this means that we can transform any solution to \ref{p2} into a monotone one via the steps \ref{p2}$\Rightarrow$\ref{p3}$\Rightarrow$\ref{p2}.
\end{remark}

\section{Application: Approximate degree vs.\ certificate complexity} \label{sec:apps}

Finally, we prove \autoref{cor:adeg}, which states that there exists a total function $f$ with $\C(f)\geq \tOmega(\adeg(f)^3)$. Let us quickly recall the definition of the \emph{$\epsilon$-approximate degree} $\adeg_\epsilon(f)$ of an $n$-bit boolean function~$f$: it equals the least degree of an $n$-variate polynomial $p\colon\mathbb{R}^n\to\mathbb{R}$ such that $p(x) \in f(x)\pm\epsilon$ for every boolean input $x\in\{0,1\}^n$. We also set $\adeg(f)\coloneqq\adeg_{1/3}(f)$.

\subsection{Proof of \autoref{cor:adeg}}

By applying the construction \ref{p2}$\Rightarrow$\ref{p1} (\autoref{thm:twotoone}) to our $\Hex$ function (\autoref{sec:hex}), we get a total $g$ with $\C_0(g) \geq\tOmega(\UC_1(g)^{1.5})$. All we have to show is that $g$ also has
\begin{equation}\textstyle \label{eq}
\adeg(g) ~\leq~ \tO(\sqrt{\UC_1(g)}).
\end{equation}
Let us examine the function constructed by \ref{p2}$\Rightarrow$\ref{p1} using the notation in that proof.
This proof starts out with an original $n$-bit function $f$ (namely, $\smash{\Hex_{\sqrt{n}}}$) and it defines from it a new function~$g$ on~$O(n^2 \log^2 n)$ bits using the cheat sheet framework.
An input to $g$ consists of $k\coloneqq \log n$ inputs to~$f$ and an array of size $n$, where each cell of the array is of size $k\ell$, where $\ell \leq 2 \C(f) \log n$ is the number of bits needed to specify a certificate of $f$. In a $1$-input to $g$, the correct cell, which is cell $s_z$, is supposed to contain $k$ certificates for the $k$ instances of $f$. We did not specify how the certificates would be described since the construction \ref{p2}$\Rightarrow$\ref{p1} applies to a general function $f$, but now let us describe them precisely for $f=\Hex$.
Here, a convenient $0$-certificate is a list of adjacent $0$-entries that starts from the left and ends on the right. For a $1$-certificate we can have a similar list that starts at the top and ends at the bottom. Let us modify our function $g$ to require that the certificates are presented in exactly this format.

Now for any cell $c$, consider the boolean function $g_c$ that on an input $z$ to $g$ evaluates to $1$ if $g(z)=1$ and additionally that cell $c$ is the one pointed to by $z$ (that is, $s_z = c$). 
We will show that this boolean function has an approximating polynomial of degree $\tO(\sqrt{\UC_1(g)})$. 

To check if cell $c$ is the one pointed to by the $\log n$ copies of $f$, we first need to check that the certificates contained in $c$ are valid certificates for the $\log n$ instances of $f$, and that $\log n$ $f$-outputs of these certificates, when interpreted as a number is indeed $c$.
First we claim that checking if a certificate for a particular $f$ is valid can be done with an approximating polynomial of degree~$\tO(\sqrt{\UC_1(g)})$.
Let us do this for $0$-certificates, and the construction for $1$-certificates is similar. 
Each $0$-certificate for an instance of $f$ contains $\C(f)$ many $\Hex$-matrix entries that are adjacent, all having the value $0$, and starting at the left and ending at the right.
Checking if two adjacent entries in the list correspond to adjacent matrix entries requires only $O(\log n)$ queries by a deterministic query algorithm (decision tree). There are $O(n)$ such checks to be made. Checking if a matrix entry in the list is $0$ requires $O(\log n)$ queries as well. There are $O(n)$ such checks to be made. And finally checking that the first and last entry of the list are on the left and right require $O(\log n)$ queries. In total we have to make $O(n)$ checks, each of which cost $O(\log n)$ queries. 
Equivalently, we want to compute the logical $\AND$ of $O(n)$ many query algorithms, each of which has query complexity $O(\log n)$.

A deterministic query algorithm of $O(\log n)$ queries can be converted to an exact polynomial of degree $O(\log n)$. Nisan and Szegedy~\cite{Nisan1995} showed that there is a polynomial of degree $O(\sqrt{n})$ to approximate the $n$-bit $\AND$ function. Composing this polynomial with a $O(\log n)$-degree polynomials for the individual checks gives us an approximating polynomial of degree $\tO(\sqrt{n})$ for checking if a particular certificate for $f$ is valid. Since there are $\log n$ certificates to be checked, checking all of them does not increase the degree by more than a log factor. Once we have checked if all the~$f$ certificates are valid, we know the outputs and can check if this equals $c$. Thus we have an approximating polynomial for $g_c$ of degree $\tO(\sqrt{n})$.

Now that we know that $g_c$ has an approximating polynomial of degree $\tO(\sqrt{n})$, we can construct one for $g$ from such polynomials. First we boost the approximation accuracy of the polynomials we constructed to have error $1/3n$, which only increases the degree by a log factor.
Then we observe that $g(z)=1$ if and only if one of the $g_c(z)$ functions evaluate to $1$, and furthermore, no more than one of them can evaluate to $1$ since these are unambiguous certificates. So we get an approximate polynomial for $g$ by simply summing up the polynomials for all $g_c$. Since each polynomial had error $1/3n$, the resulting polynomial has error at most $1/3$. The degree has not increased, and hence we have an approximating polynomial for $g$ of degree $\tO(\sqrt{n}) = \tO(\sqrt{\UC_1(f)})$. This proves \eqref{eq}.

\bigskip
\subsection*{Acknowledgements}

Thanks to Ryan Alweiss, Harry Buhrman, Nati Linial, and Mario Szegedy for their thoughts on \autoref{p3}. Thanks to Thomas Watson for many discussions about Hex and complexity classes.

\bigskip


\DeclareUrlCommand{\Doi}{\urlstyle{sf}}
\renewcommand{\path}[1]{\small\Doi{#1}}
\renewcommand{\url}[1]{\href{#1}{\small\Doi{#1}}}

\bibliographystyle{alphaurl}
\bibliography{hex}

\end{document}